%% file: main.tex
\documentclass[journal]{IEEEtran}
\usepackage{amsthm,amsmath,amssymb, amsfonts}
\usepackage{enumitem}

\theoremstyle{plain}

\newtheorem{theorem}{Theorem}

\newtheorem{proposition}{Proposition} 

\theoremstyle{remark}

\theoremstyle{definition}

\setlist[description]{style=nextline,labelsep=.5em,leftmargin=0pt,font=\bfseries}
\usepackage{rotating}
\usepackage{algorithmic}
\usepackage{algorithm}
\usepackage{array}
\usepackage[caption=false,font=normalsize,labelfont=sf,textfont=sf]{subfig}
\usepackage{textcomp}
\usepackage{stfloats}
\usepackage{url}
\usepackage{verbatim}
\usepackage{graphicx}
\usepackage{cite}
\usepackage{booktabs}
\usepackage{multirow}
\usepackage{graphicx}
\usepackage{float} 
\usepackage{booktabs, multirow} 
\usepackage{soul}
\usepackage{xcolor,colortbl} 
\usepackage{changepage,threeparttable} 
\usepackage{url}
\usepackage{hyperref}

\begin{document} 
\bstctlcite{IEEEexample:BSTcontrol}

\title{Sub-Band Spectral Matching with Localized Score Aggregation for Robust Anomalous Sound Detection}

\author{Phurich Saengthong$^{1}$, Takahiro Shinozaki$^{1}$%
\thanks{$^{1}$ Department of Information and Communications Engineering, Institute of Science Tokyo, Japan (e-mail: saengthong.p.aa@m.titech.ac.jp; website: \url{http://www.ts.ip.titech.ac.jp}).}
\thanks{Manuscript under review.}}

\markboth{Journal of \LaTeX\ Class Files,~Vol.~14, No.~8, August~2021}%
{Shell \MakeLowercase{\textit{et al.}}: A Sample Article Using IEEEtran.cls for IEEE Journals}


\maketitle
\begin{abstract}
Detecting subtle deviations in noisy acoustic environments is central to anomalous sound detection (ASD). A common training-free ASD pipeline temporally pools frame-level representations into a band-preserving feature vector and scores anomalies using a single nearest-neighbor match. However, this global matching can inflate normal-score variance through two effects. First, when normal sounds exhibit band-wise variability, a single global neighbor forces all bands to share the same reference, increasing band-level mismatch. Second, cosine-based matching is energy-coupled, allowing a few high-energy bands to dominate score computation under normal energy fluctuations and further increase variance. We propose BEAM, which stores temporally pooled sub-band vectors in a memory bank, retrieves neighbors per sub-band, and uniformly aggregates scores to reduce normal-score variability and improve discriminability. We further introduce a parameter-free adaptive fusion to better handle diverse temporal dynamics in sub-band responses. Experiments on multiple DCASE Task 2 benchmarks show strong performance without task-specific training, robustness to noise and domain shifts, and complementary gains when combined with encoder fine-tuning.
\end{abstract}

\begin{IEEEkeywords}
anomaly detection, anomalous sound detection, domain generalization
\end{IEEEkeywords}

\newcommand{\std}[1]{{\scriptsize$\pm$ #1}}
\section{Introduction}
\input{introductionv2}

\section{Preliminary: Global Band Matching}
\input{backgroundv3}

\section{Method}
\input{specmatch}

\section{Related Work}
\input{related_work}

\section{Experimental Details}
\input{experimental_details}

\definecolor{lightgray}{gray}{0.92}


\section{Results}
\label{sec:results}
\input{single_experiments}

\section{Conclusion}
We present \textit{BEAM}, a band-aligned sub-band matching framework for anomalous sound detection (ASD) that replaces tied-reference global matching with a shared memory of sub-band embeddings, per-band nearest-neighbor retrieval, and uniform aggregation across sub-bands. This design targets two sources of inflated normal-score variance: tied-reference mismatch under band-wise variability and implicit energy weighting that lets a few high-energy bands dominate matching and aggregation. Across both handcrafted and deep neural front ends, \textit{BEAM} improved over tied-reference global matching and achieved strong training-free performance on DCASE Task~2 benchmarks, while remaining competitive with methods that require task-specific training. Our SDT analysis provided a sufficient condition under which variance reduction, without excessive mean-gap loss, improved detection sensitivity. For spectrogram-based features, \textit{AdaBEAM} extended \textit{BEAM} by fusing mean- and max-pooled views through a parameter-free Dynamic Mean--Max rule, indicating that combining complementary stationary and transient cues can further improve sensitivity and performance. Overall, sub-band memory modeling with localized scoring provides a practical alternative to tied-reference global matching for fine-grained ASD.

The main benchmark comparisons were conducted under a fixed implementation protocol without exhaustive hyperparameter tuning, while our analysis suggests that additional gains may be achievable by tuning handcrafted-feature windowing and sub-band normalization neighborhood size $K$ for each dataset (and, where appropriate, machine setup). These findings suggest possible future directions centered on adaptive band partitioning and improved score calibration for machine-specific tuning.

\section*{Acknowledgments}
This work was supported by JTEKT Corporation.

\appendix
\input{appendix13}

\bibliographystyle{IEEEtran}
\bstctlcite{BSTcontrol}
\bibliography{new_refs}

\newpage

\vfill

\end{document}

%% file: introductionv2.tex
\IEEEPARstart{A}{nomalies} are defined not only by what they are, but by how they deviate from expectations. In acoustics, even subtle sound variations can reveal hidden faults that would otherwise go unnoticed. This motivates anomalous sound detection (ASD) for industrial condition monitoring, where an ASD system assigns an anomaly score to each clip and raises an alarm when the score crosses a threshold \cite{Koizumi_8081297, Uematsu2017AnomalyDT, purohit_mimii_2019, koizumi_toyadmos_2019, koizumi_description_2020}. Because abnormal events are rare and diverse, ASD is typically trained in a semi-supervised setting using only normal data \cite{chandola2009anomaly, wilkinghoff2024audio}. In deployment, shifts in background noise, environment, or operating conditions can widen the normal-score distribution and destabilize fixed thresholds. For example, a normal machine recording can exceed a threshold tuned on training data when the noise floor rises, causing false alarms and frequent retuning.

Recent state-of-the-art (SOTA) ASD systems \cite{jiang_anopatch_2024, 10887974, SaengthongSCITOK2025, bing2025explore, local_den_norm_kevin, wilkinghoff2025density} adopt a \emph{global matching} design, where each clip is summarized by a single clip-level embedding and compared to a reference memory via nearest-neighbor (NN) search, most commonly using cosine similarity. In this framework, domain shift from changing machine operating conditions or noise, and the resulting threshold instability, is often addressed with score normalization or calibration \cite{10887974, SaengthongSCITOK2025, local_den_norm_kevin, wilkinghoff2025density}. GenRep-style systems \cite{10887974, SaengthongSCITOK2025, wilkinghoff2025density} show that large-scale, general-domain pre-trained audio encoders can deliver strong performance and robustness under domain shift, while remaining competitive with target-machine encoder fine-tuning \cite{Nishida_arXiv2025_01, SaengthongSCITOK2025}. This makes global matching a compact, practical \emph{training-free} baseline when task-specific training is infeasible, requiring only storage of reference embeddings for rapid deployment across diverse machine operating and recording conditions with minimal target-machine tuning.

In signal detection theory (SDT), detection degrades when the normal-score distribution broadens relative to normal--anomalous separation. We analyze a failure mode of global cosine matching that increases normal-score variance through two mechanisms. Consider normal machine sounds whose underlying state is unchanged, but whose test recordings have low SNR and background-noise conditions different from those represented in the references stored in the memory bank. Let the clip embedding be decomposed into sub-bands (i.e., fixed feature partitions), and suppose that normal recording-condition variation perturbs these sub-bands unequally. Global cosine matching then selects a single nearest-neighbor reference using the full clip embedding consisting of these sub-bands. Because cosine similarity in this setting is magnitude-coupled, higher-magnitude sub-bands contribute more to both reference selection and final score computation. This leads to two variance-inflating effects: first, the best whole-embedding neighbor need not be the best match for each sub-band, so reusing it across sub-bands creates residual mismatches that widen the normal-score distribution. Second, a few high-magnitude sub-bands can disproportionately influence the final score, causing variation in those sub-bands to contribute more to normal-score variation. Consequently, anomaly decisions become less reliable even when the underlying machine state remains normal.

Motivated by these two effects, we propose \textit{BEAM} (Band-wise Equalized Anomaly Measure), a simple $k$-NN-based scoring method that replaces single-reference global matching with band-aligned sub-band matching and uniform aggregation. Given a clip-level \emph{frequency-derived} representation, \textit{BEAM} stores reference descriptors in a band-aligned memory bank of sub-band embeddings shared across reference domains; at inference, each sub-band queries its corresponding bank and the band scores are aggregated uniformly. This per-band retrieval can decouple reference selection across sub-bands, reducing tied-reference mismatch, while uniform aggregation can remove energy coupling and prevent high-magnitude sub-bands from dominating the final score, thereby stabilizing normal scores. We further introduce \textit{AdaBEAM}, which integrates Dynamic Mean--Max (DMM) scoring by maintaining temporally mean-pooled and max-pooled sub-band memories, scoring both views, and fusing clip-level scores with a simple parameter-free rule. Finally, we formalize how sub-band matching and uniform aggregation can reduce normal-score variance and provide an SDT-based sufficient condition showing when this variance gain outweighs any accompanying change in mean separation.

We evaluate \textit{BEAM} across several feature families, including both handcrafted and pre-trained clip representations. For spectrogram-derived features, we use temporally pooled log-Mel, MFCC, and deep spectrogram embeddings from pre-trained audio encoders. We also include LPC spectra, which, although explored in prior ASD work, remain relatively underused compared with spectrogram-derived features. Their ability to capture both smooth spectral envelopes and narrow-band resonances makes them a natural match for our focus on fine-grained spectral structure. On DCASE Task~2 benchmarks \cite{koizumi_description_2020, dohi_description_2023, Nishida_arXiv2024_01} under the standard evaluation protocols, \textit{BEAM} and its adaptive variant \textit{AdaBEAM} achieve strong performance among training-free systems in Official Score when combined with frozen, pre-trained audio encoders. Moreover, \textit{BEAM} remains effective with handcrafted features, surpassing task-specific trained autoencoder baselines. In this spectrogram-derived instantiation, \textit{AdaBEAM} further improves detection performance over \textit{BEAM} in both noisy and domain-shift settings by fusing complementary temporally mean-pooled and max-pooled views via DMM. Beyond the training-free setting, \textit{BEAM}/\textit{AdaBEAM} also yield consistent improvements over the corresponding global-matching baselines when the encoder is fine-tuned with task-specific objectives on target-machine data, although our primary focus is on the training-free regime.

We make the following contributions:
\begin{itemize}
    \item \textbf{Variance-focused analysis of global matching:} we characterize how tied-reference retrieval and energy-coupled aggregation contribute to normal-score variance inflation under domain shift and noise.
    \item \textbf{BEAM scoring:} we introduce \textit{BEAM}, which performs band-aligned sub-band retrieval with uniform aggregation, and describe its variant \textit{AdaBEAM}, which fuses mean- and max-pooled views via Dynamic Mean--Max (DMM) scoring.
    \item \textbf{SDT-based condition for improved sensitivity:} we derive a sufficient condition under which variance reduction yields a net gain in detection sensitivity $d'$ relative to global cosine matching.
    \item \textbf{Empirical validation on DCASE Task~2:} the proposed methods achieved strong training-free performance across different feature types, remained competitive with several methods that rely on task-specific training, and further provided consistent gains in fine-tuned settings under noise and domain shift.
\end{itemize}

%% file: backgroundv3.tex
\label{sec:baseline-global}

This section formalizes the tied-reference global-matching formulation that \textit{BEAM} builds on. Each audio clip is mapped to a clip-level representation, and anomaly evidence is computed from a single nearest-neighbor match in that representation space \cite{wilkinghoff_sub-cluster_2021, 10887974, local_den_norm_kevin, wilkinghoff2025density}. We first define the clip representations used in this work, then summarize the standard global-matching ($k$-NN) score used in recent systems (e.g., GenRep-style pipelines \cite{10887974, wilkinghoff2025density}), and finally analyze the structure and limitations of exact global cosine scoring.

\subsection{Clip-Level Representations}
\label{sec:freq-des}

We represent each audio clip by a vector $f\in\mathbb{R}^F$ whose coordinates follow an ordered feature axis (e.g., frequency bins, spectral bands, or ordered spectral-derived coefficients). We assume this axis can be partitioned into meaningful sub-bands. In this work, we instantiate $f$ using three front-end families: (i) handcrafted frame-level spectral/cepstral features followed by temporal pooling (Log-Mel filterbank energies and MFCCs), (ii) spectrogram-based embeddings from large-scale pretrained encoders that produce patch-level features before pooling, and (iii) LPC spectra, obtained directly from all-pole signal modeling rather than from frame-level temporal pooling.

\subsubsection{Temporally pooled spectral and cepstral features}

Let $X \in \mathbb{R}^{T \times F}$ denote a frame-level feature matrix, where $T$ is the number of frames and $F$ is the feature dimension. For spectral features such as Log-Mel filterbank energies \cite{wilkinghoff_sub-cluster_2021}, index $j$ corresponds to a mel-frequency band. For cepstral features such as MFCCs, index $j$ corresponds to an ordered cepstral coefficient. Although the MFCC axis is not a frequency-band axis, it is ordered and can be partitioned; therefore, we include MFCCs as an additional instantiation of $f$.

Clip-level features are obtained by pooling $X$ over time:
\begin{equation}
f(j) =
\begin{cases}
\frac{1}{T} \sum_{t=1}^{T} X(t,j), & \text{(temporal mean pooling)}, \\[6pt]
\max_{t} X(t,j), & \text{(temporal max pooling)},
\end{cases}
\end{equation}
for $j = 1,\dots,F$. The resulting vector $f \in \mathbb{R}^F$ summarizes the entire recording over time while preserving an ordered feature axis (frequency for Log-Mel, coefficient index for MFCCs).

\subsubsection{Spectrogram embeddings from pretrained encoders}
For Transformer-based encoders \cite{NIPS2017_3f5ee243} such as BEATs \cite{chen_beats_2022}, the input spectrogram is divided into a grid of time--frequency patches, each mapped to a $D$-dimensional embedding. This yields patch embeddings $h\in\mathbb{R}^{L\times D}$ with $L=T_p\cdot F_p$, where $T_p$ and $F_p$ are the numbers of patch positions along time and frequency, respectively. We reshape $h$ into $T_p\times F_p\times D$, group embeddings by frequency index, and apply temporal pooling over the $T_p$ patches within each frequency band to obtain $F_p$ band descriptors in $\mathbb{R}^{D}$. Concatenating these per-band embeddings yields a clip-level representation $f\in\mathbb{R}^{F_p\cdot D}$ whose coordinates remain organized by frequency band. Encoders fine-tuned on reference-machine data using auxiliary objectives \cite{jiang_anopatch_2024,10890020,fujimura2025asdkit} can be used in the same way.

\subsubsection{LPC spectrum}
\label{sec:lpc-prelim}
LPC spectra form a complementary handcrafted feature class. Unlike spectrogram-derived features, which are computed per frame and then temporally pooled, LPC spectra are obtained by fitting an all-pole model directly to the waveform and sampling its magnitude response. Compared with FFT-based spectra, the all-pole model yields a smoother envelope that can make resonant peaks more salient and easier to localize.

Given a waveform segment $s(n)$, we estimate LPC coefficients $\{a_k\}_{k=1}^{p}$ (e.g., via Burg’s method~\cite{burg1975maximum, andersen1974calculation}). Sampling the magnitude response of the resulting all-pole model at $F$ frequency bins yields an LPC spectrum feature $f\in\mathbb R^{F}$:
\begin{equation}
f[m]\ \triangleq\ \left|\frac{1}{1-\sum_{k=1}^{p} a_k e^{-j\omega_m k}}\right|,\qquad m=1,\dots,F.
\end{equation}


\subsection{Tied-Reference Global Matching}
\label{sec:global-knn}
GenRep-style training-free ASD systems \cite{10887974, SaengthongSCITOK2025, wilkinghoff2025density} typically score a test clip by global nearest-neighbor matching against a normal reference memory, optionally followed by score normalization (e.g., domain-wise normalization \cite{10887974, SaengthongSCITOK2025} or local density normalization \cite{local_den_norm_kevin, wilkinghoff2025density}). Here, we analyze the core tied-reference global-matching score before such normalization. Given a clip-level representation $f_y\in\mathbb{R}^F$ and a normal reference memory $\mathcal M_{\text{glob}}=\{f_i\}_{i=1}^R$ (with $R$ normal reference clips), the global-matching pipeline scores test clip $y$ using its single nearest neighbor under cosine similarity:
\begin{equation}
\label{eq:s-global}
\begin{aligned}
i^\ast(y) &\,=\, \arg\max_{i\in\{1,\ldots,R\}}
\big\langle \widehat f_y,\,\widehat f_i\big\rangle,\\
S_{\mathrm{glob}}(y)
&\,\triangleq\, d_{\mathrm{NN}}(f_y,\mathcal M_{\text{glob}})
\,=\, \tfrac{1}{2}\Bigl(1-\big\langle \widehat f_y,\,\widehat f_{i^\ast(y)}\big\rangle\Bigr),
\end{aligned}
\end{equation}
where $\widehat x \triangleq x/\|x\|_2$ denotes $\ell_2$-normalization.

\subsubsection{Global Band Cosine Decomposition}
\label{sec:global-decomp}
To analyze how this tied-reference score behaves across frequency, we decompose $S_{\mathrm{glob}}(y)$ into sub-band contributions. We partition the feature axis into $N_b$ disjoint sub-bands. Let $w_{y,j}$ and $w_{i^\ast(y),j}$ denote the sub-band-$j$ slices of test clip $y$ and its global nearest neighbor $i^\ast(y)$, and let $s_j(y) \triangleq \langle \widehat w_{y,j},\,\widehat w_{i^\ast(y),j}\rangle$ denote the local sub-band similarity. Define the sub-band energy weights as $\beta_j(y)\triangleq \|w_{y,j}\|\|w_{i^\ast(y),j}\| / (\|f_y\|\|f_{i^\ast(y)}\|)$ and the total energy coupling as $\rho(y)\triangleq\sum_{j=1}^{N_b}\beta_j(y)$. Then the global score decomposes exactly as
\begin{equation}
\label{eq:glob-decomp-sum}
S_{\mathrm{glob}}(y)
\,=\,
\frac12\left(1-\rho(y)\sum_{j=1}^{N_b}\tilde\beta_j(y)\,s_j(y)\right),
\end{equation}
where $\tilde\beta_j(y)\triangleq \beta_j(y)/\rho(y)$ are normalized weights.

This decomposition makes two limitations of tied-reference global matching explicit, both of which can be amplified by domain shift or added noise: (i) all bands share the same nearest-neighbor index $i^\ast(y)$, so $s_j(y)$ reflects a global compromise rather than a band-wise best match; and (ii) the aggregation is energy-coupled through the data-dependent weights $\tilde\beta_j(y)$, which can shift with recording conditions and concentrate mass on a few bands. Both effects can increase normal-score variability and motivate variance-reduced alternatives. See Appendix for full definitions and derivation.

%% file: specmatch.tex
\begin{figure}[t]
    \centering
    \includegraphics[width=\columnwidth]{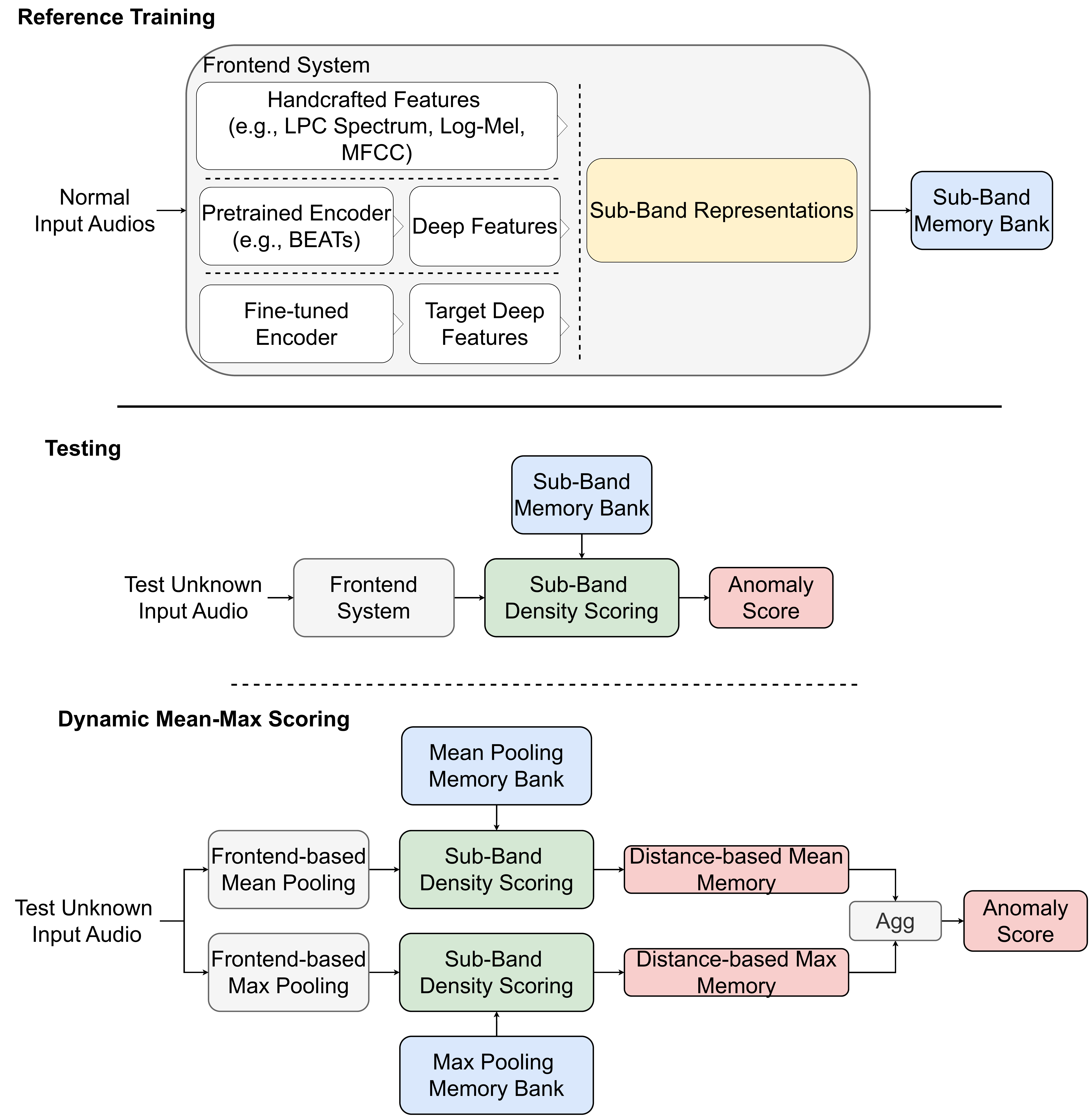}
    \caption{Overview of \textit{BEAM} and \textit{AdaBEAM}. Normal audio is mapped to handcrafted or deep features, sliced into sub-bands, and stored in a shared memory bank. At test time, each query sub-band is matched within its band-aligned memory, and the resulting band scores are uniformly aggregated to produce the anomaly score. \textit{AdaBEAM} adds Dynamic Mean--Max (DMM) fusion by building mean- and max-pooled sub-band memories, scoring both views, and combining the resulting clip scores with a simple parameter-free rule.}
    \label{fig:method}
\end{figure}

\label{sec:beam}
We propose \textit{BEAM} to mitigate the variance inflation of tied-reference global matching (Sec.~\ref{sec:global-knn}) through band-aligned retrieval and uniform aggregation. Sec.~\ref{sec:mem-sub} defines a band-aligned memory bank of sub-band descriptors, and Sec.~\ref{sec:sub-match-agg} introduces band-restricted nearest-neighbor scoring with uniform aggregation. To capture both stable and transient temporal patterns, Sec.~\ref{sec:dmm} extends this design to \textit{AdaBEAM} by fusing scores from temporal mean and temporal max pooling via Dynamic Mean--Max (DMM) integration. Fig.~\ref{fig:method} summarizes the full pipeline.
 
\subsection{Memory Bank of Sub-Band Embeddings}
\label{sec:mem-sub}
Given a clip representation $f\in\mathbb{R}^{F}$ (obtained either by temporal pooling of a time--frequency feature or directly from an LPC spectrum; Sec.~\ref{sec:freq-des}), we partition each reference descriptor $f_i$ into $N_b$ contiguous sub-bands:
\[
\mathbf{w}_{i,j}=f_i[\mathcal I_j]\in\mathbb{R}^{|\mathcal I_j|},\qquad j=1,\dots,N_b,
\]
\[
\mathcal I_j=\{1+(j-1)s,\dots,1+(j-1)s+C-1\},
\]
where $C$ and $s$ are the window size and stride, yielding $N_b=1+\lfloor(F-C)/s\rfloor$ band positions. We maintain a band-aligned memory bank for each position,
\[
\mathcal M_j \triangleq \bigl\{\mathbf w_{i,j}\bigr\}_{i=1}^{R},\qquad j=1,\dots,N_b,
\]
where $R$ denotes the number of normal reference samples. We denote the full memory as $\mathcal M=\{\mathcal M_j\}_{j=1}^{N_b}$.

\subsection{Local Matching and Clip-Level Aggregation}
\label{sec:sub-match-agg}
Given a test clip $y$, we form its sub-band descriptors $\{\mathbf w_j(y)\}_{j=1}^{N_b}$ and score each band by a band-restricted nearest-neighbor distance:
\begin{equation}\label{eq:sub-nn}
d_{\text{sub}}(\mathbf w_{j}(y))=\min_{\mathbf w\in\mathcal M_j} d_{NN}\!\left(\mathbf w_{j}(y),\mathbf w\right),\qquad j=1,\dots,N_b,
\end{equation}
where $d_{NN}(\cdot,\cdot)$ is the cosine distance defined in Sec.~\ref{sec:global-knn}. The clip-level score is the uniform aggregation of band-wise distances:
\[
S_{\mathrm{sub}}(y)\triangleq \frac1{N_b}\sum_{j=1}^{N_b} d_{\text{sub}}(\mathbf w_{j}(y)).
\]

The score $S_{\mathrm{sub}}(y)$ already defines a valid clip-level anomaly measure and directly instantiates the variance-motivated design in Sec.~\ref{sec:global-decomp}. In practice, however, the raw sub-band distances $\{d_{\text{sub}}(\mathbf w_j(y))\}$ can exhibit band- and domain-dependent scales (e.g., different local noise floors). To improve cross-band comparability before aggregation, we apply local density normalization (LDN)~\cite{local_den_norm_kevin} at the sub-band level and define the normalized distance as:
\begin{equation}
\label{eq:sub-ldn}
d^{\mathrm{norm}}_{\mathrm{sub}}(\mathbf w_{j}(y))
\triangleq
\min_{\mathbf w\in\mathcal M_j}
\frac{d_{NN}\!\left(\mathbf w_{j}(y),\,\mathbf w\right)}
{\sum_{k=1}^{K} d_{NN}\!\left(\mathbf w,\,\mathbf w_k\right)},
\end{equation}
where $\{\mathbf w_k\}_{k=1}^{K}$ are the $K$ nearest neighbors of the matching sample $\mathbf w$ within the same band-aligned memory $\mathcal M_j$. The denominator estimates the local neighborhood scale around $\mathbf w$ within $\mathcal M_j$, calibrating distances before aggregation. Unlike clip-level normalization, which applies a single global scale, sub-band LDN uses band-local scales so that no band systematically dominates the uniform aggregation. The resulting clip-level score is the uniform aggregation of normalized band scores:
\begin{equation}
\label{eq:clip-avg}
S^{\mathrm{norm}}_{\mathrm{sub}}(y)
\triangleq
\frac{1}{N_b}\sum_{j=1}^{N_b} d^{\mathrm{norm}}_{\mathrm{sub}}(\mathbf w_{j}(y)).
\end{equation}

\subsection{Dynamic Mean--Max Integration (DMM)}
\label{sec:dmm}
To handle diverse temporal dynamics, \textit{AdaBEAM} computes two parallel clip-level scores: $S_{\mathrm{Tmean}}(y)$ from temporal mean pooling, which captures stable spectral structure, and $S_{\mathrm{Tmax}}(y)$ from temporal max pooling, which emphasizes transient high-energy peaks. We integrate them using a simple parameter-free rule:
\begin{equation}
\label{eq:dmm}
S_{\text{DMM}}(y)=\operatorname{agg}\!\left(S_{\text{Tmean}}(y),\ S_{\text{Tmax}}(y)\right),
\end{equation}
where $\operatorname{agg}$ denotes a fusion operator. By default, we use the mean, and we evaluate alternative choices in our analysis.

Building on the variance-inflation mechanism identified for global matching in Sec.~\ref{sec:global-decomp}, Sec.~\ref{sec:theory} provides a theoretical analysis showing how sub-band scoring can reduce normal-score variance and, from an SDT perspective, admit a sufficient condition for improved detection sensitivity ($d'$). Full derivations are deferred to the Appendix.

\section{Theoretical Analysis}
\label{sec:theory}
We analyze \textit{BEAM} at the score level under broad normal operating-condition variability (e.g., machine-state changes and background noise). Proposition~1 upper-bounds the normal-score variance of sub-band scoring by a constant multiple of tied-reference global matching. Theorem~1 converts this variance bound into a sufficient condition for improved SDT sensitivity $d'$, determined by how well the anomalous--normal mean gap is preserved. For readability, we keep the main statements in this section and defer setup details, formal definitions, and full derivations to Appendix.

The DMM fusion rule is a complementary design choice applied after computing two view-specific scores, each based on a fixed pooling operator (e.g., \textit{Tmean} or \textit{Tmax}). Qualitatively, mean fusion can reduce normal-score variance when the two view fluctuations are not perfectly correlated, whereas max fusion preserves sharp intermittent anomalies that averaging may dilute; we validate this trade-off empirically in Sec.~\ref{sec:results}. For clarity, Proposition~1 and Theorem~1 are stated for the \emph{raw} scores (without LDN), while LDN is treated as an orthogonal calibration layer and evaluated empirically.

\vspace{0.5em}
\noindent\textbf{Proposition 1.}
\textit{Assume all random variables and their first- and second-order moments appearing in the analysis are finite and $0<\mathrm{Var}(S_{\mathrm{sub}}\mid\mathcal N)<\infty$ and $0<\mathrm{Var}(S_{\mathrm{glob}}\mid\mathcal N)<\infty$.
Let $C_{\mathrm{var}}\ge 0$ be the variance ratio constant defined in Appendix~Eq.~\eqref{eq:Cvar_def}.
This constant aggregates the net effect of switching
(i) from tied-reference global matching to band-aligned retrieval and
(ii) from energy-coupled global aggregation to uniform aggregation,
with regime drift handled via the total-variance decomposition in Appendix~Sec.~\ref{app:variance}.
Then}
\begin{equation}
\label{eq:prop1_main}
\mathrm{Var}(S_{\mathrm{sub}}\mid\mathcal N)\ \le\ C_{\mathrm{var}}\;\mathrm{Var}(S_{\mathrm{glob}}\mid\mathcal N).
\end{equation}

Using the variance-ratio bound $C_{\mathrm{var}}$ in \eqref{eq:prop1_main}, we now derive a sufficient condition for improved detection sensitivity.

\vspace{0.5em}
\noindent\textbf{Theorem 1.}
\textit{Under the assumptions of Proposition~1, assume the anomalous mean score exceeds the normal mean score for global matching (i.e., $\Delta(S_{\mathrm{glob}})>0$), where $\Delta(S)$ denotes the anomalous--normal mean gap. Let $d'(S)$ denote the standard signal-detection sensitivity index (defined in Appendix~Sec.~\ref{app:setup_scores}). If the mean gap of sub-band scoring satisfies $\Delta(S_{\mathrm{sub}})\ge \sqrt{C_{\mathrm{var}}}\,\Delta(S_{\mathrm{glob}})$, then}
\begin{equation}
d'(S_{\mathrm{sub}})\ \ge\ d'(S_{\mathrm{glob}}).
\end{equation}

Theorem~1 shows that $d'$ improves when mean-gap attenuation is no worse than the normal standard-deviation reduction implied by Proposition~1. Accordingly, our diagnostics report both the variance ratio and the mean-gap ratio across band sizes to verify this trade-off (see Sec.~\ref{sec:dcase_benchmarks} and Appendix~Sec.~\ref{app:quan_d}).

%% file: related_work.tex
This work builds on ASD systems developed in the DCASE community, particularly Task~2 \cite{koizumi_description_2020, Kawaguchi2021, dohi_description_2022, dohi_description_2023, Nishida_arXiv2024_01, Nishida_arXiv2025_01, wilkinghoff2025handlingdomainshiftsanomalous}. A large part of this literature improves ASD through \emph{task-specific training}. Early approaches fit generative models to normal data and score anomalies by reconstruction error or likelihood; representative examples include \cite{Koizumi_8081297, koizumi_spidernet_9053620, hayashi2020conformer, suefusa_anomalous_2020, wichern_anomalous_2021, 10095568, zavrrtanki_10447941, Harada_arXiv2023_01, Harada_EUSIPCO2023_01, wavenet_8553423, Giri2020_made, lee2021robust, guan_10096356, glowaff_9414662, Dohi2021DisentanglingPP}. Later work shifted toward discriminative or OE-based objectives \cite{hendrycks2019oe}, often learning noise-robust embeddings (e.g., with machine-type metadata) and applying lightweight back-ends such as k-NN \cite{eskin_geometric_2002}. These embeddings are either trained from scratch \cite{Giri2020, wilkinghoff_sub-cluster_2021, liu_anomalous_2022full, wilkinghoff_self-supervised_2023_full, zhangOutlierawareInlierModeling2023, Tanaka_10446288, wilkinghoff_why_2024, 10447859, local_den_norm_kevin, wilkinghoff2025density, 10890020} or obtained by fine-tuning pre-trained encoders \cite{10290003, 10447183, han_10095398, jiang_anopatch_2024, anopatch_lora, 10889514, bing2025explore, wilkinghoff2025density}. More recently, large-scale self-supervised encoders have shown competitive Task~2 performance without OE training or fine-tuning \cite{10887974, wu2025towards, SaengthongSCITOK2025}. In this setting, \textit{BEAM} targets a complementary axis by improving \emph{training-free} anomaly scoring with fixed representations.

\textit{BEAM} builds on large-scale self-supervised audio encoders pre-trained on datasets such as AudioSet \cite{gemmeke_audio_2017, openl3, gong21b_interspeech, niizumi2022masked, chen_beats_2022, dinkel2023ced, niizumi2024m2dx, niizumi2024m2d-clap, ijcai2024p421, alex2025sslam}, which yield transferable representations, especially when paired with temporal mean pooling \cite{niizumi2022masked, 10887974, wilkinghoff2025density}. Representative frameworks such as GenRep \cite{10887974} and AnoPatch \cite{jiang_anopatch_2024} combine these encoders with lightweight nearest-neighbor back-ends on clip-level embeddings: GenRep uses temporal mean-pooled BEATs \cite{chen_beats_2022} with MemMixup and domain-score normalization, whereas AnoPatch applies patch pooling on a BEATs model fine-tuned with machine-specific meta-information. Both score a test clip via a \emph{tied-reference} global match against domain-specific reference banks, then select a relative score across domains. In contrast, \textit{BEAM} preserves frequency resolution after temporal pooling and performs band-aligned nearest-neighbor matching against \emph{domain-agnostic} memory banks shared across domains, followed by uniform aggregation across sub-bands.

\textit{BEAM} can be combined with LDN approaches \cite{local_den_norm_kevin, wilkinghoff2025density, SaengthongSCITOK2025}, which normalize anomaly scores and can support more stable thresholding under domain shift. Prior work typically applies LDN to a \emph{single} clip-level anomaly score obtained from global matching. In contrast, we apply LDN to \emph{sub-band} nearest-neighbor distances before uniform aggregation, calibrating band-wise distances prior to fusion. We also analyze the effect of the neighborhood size $K$ in sub-band LDN on score calibration and robustness.

The proposed DMM scoring is related to prior methods that applied temporal pooling of spectrogram features to balance robustness and sensitivity in anomaly detection. Wilkinghoff et al. \cite{wilkinghoff_sub-cluster_2021} showed that combining temporal pooling with GMMs yields the best results when mean or max pooling is chosen per machine type. Guan et al. \cite{guan_10096356} introduced time-weighted pooling to balance mean and max effects but still required machine-specific tuning. In contrast, DMM fuses \emph{two BEAM scores} computed from mean- and max-pooled views, avoiding machine-specific selection while keeping the back-end unchanged.

The proposed sub-band matching is conceptually related to prior sub-band modeling in audio understanding \cite{fan2025fisher, nam22_interspeech, guan2023subband}. Fisher \cite{fan2025fisher} extracts frequency-wise representations as transformer inputs to address variations across sampling rates, producing global embeddings for ASD. Sound event detection methods \cite{nam22_interspeech, guan2023subband} similarly exploit frequency-dependent modeling to capture fine-grained sub-band dependencies using convolutional or attention mechanisms. In contrast, \textit{BEAM} is training-free: it scores temporally pooled clip descriptors via band-aligned sub-band retrieval and uniform aggregation, rather than learning frequency-dependent feature representations.

Our work also builds on LPC \cite{saito1967theoretical, andersen1974calculation, burg1975maximum, lpc_review_1451722}, a classic technique for modeling spectral envelopes. Unlike \cite{GleichmannTNT2024}, which uses LPC-derived features to train a VAE on short segments, we compute clip-level LPC spectra and apply k-NN scoring directly without additional training. We then evaluate this descriptor within the proposed \textit{BEAM} framework, where it captures timbre-related spectral-envelope differences \cite{nishida2024timbredifferencecapturinganomalous}.

\textit{BEAM} with pre-trained audio transformers can be viewed as a patch-level variant, conceptually related to patch-based industrial image anomaly detection (AD) \cite{bergman_deep_2020, Cohen2020SubImageAD, roth_towards_2022, Jeong2023}. Here, we form “patches” by first extracting ViT-style token embeddings \cite{dosovitskiy2021an, Jeong2023} and then collapsing the time axis to obtain frequency-indexed descriptors, so each band summarizes the entire clip over time within a frequency range. Accordingly, we aggregate scores uniformly across bands, whereas image patch AD typically applies max pooling over spatial locations to emphasize localized defects.

%% file: experimental_details.tex
\begin{table}[!t]
\centering
\caption{Per-section statistics of each machine type in the considered ASD datasets.}
\label{tab:datasets}
\resizebox{\columnwidth}{!}{%
\begin{tabular}{llccccc}
\toprule
\multicolumn{1}{c}{Dataset} & \multicolumn{1}{c}{Split} & \multicolumn{2}{c}{Source domain} & \multicolumn{2}{c}{Target domain} \\
\cmidrule(lr){3-4} \cmidrule(lr){5-6}
(sec-dev \& sec-eval) & & Normal & Anomalous & Normal & Anomalous \\

\midrule
DCASE2020T2 \cite{koizumi_description_2020} & Train & - & - & $\leq 1000$ & - \\
                                            & Test  & - & - & $\leq 400$  & $\leq 200$ \\
\midrule
DCASE2023T2 \cite{dohi_description_2023} & Train & 990 & - & 10 & - \\
                                         & Test  & 50  & 50 & 50 & 50 \\
\midrule
DCASE2024T2 \cite{Nishida_arXiv2024_01} & Train & 990 & - & 10 & - \\
                                         & Test  & 50  & 50 & 50 & 50 \\
                                         \midrule
MIMII \cite{purohit_mimii_2019} & Train & - & - & $\leq 934$ & - \\
(No sec-dev or sec-eval)                                           & Test  & - & - & $\leq 407$  & $\leq 407$ \\
\bottomrule
\end{tabular}
}
\vspace{-2mm}
\end{table}

\subsection{Data}
We use DCASE Task~2 benchmarks for the main evaluations, following \cite{local_den_norm_kevin, wilkinghoff2025density}: DCASE2020T2 \cite{koizumi_description_2020} for single-domain performance, and DCASE2023T2 \cite{dohi_description_2023} plus DCASE2024T2 \cite{Nishida_arXiv2024_01} for domain-shift robustness (Table~\ref{tab:datasets}). We also use MIMII \cite{purohit_mimii_2019} for the clip-level feature analysis in Sec.~\ref{sec:res_mimii_lpc}.

\noindent \textbf{MIMII.} MIMII contains normal and anomalous recordings from four machine types (Fan, Pump, Slider, and Valve) captured in real factory environments at SNRs of -6, 0, and 6 dB \cite{purohit_mimii_2019}. Fan and Pump primarily produce stationary sounds, whereas Slider and Valve are more non-stationary. Each machine type has four sections, and each SNR subset contains about 18k clips under the train/test split in \cite{regional_local}.

\noindent \textbf{DCASE2020T2.}
The DCASE2020T2 benchmark \cite{koizumi_description_2020} focuses on the evaluation of individual machine types, including ToyCar and ToyConveyor from the ToyADMOS \cite{Koizumi_WASPAA2019_01} dataset, and Fan, Pump, Slider, and Valve from the MIMII dataset \cite{purohit_mimii_2019}. It is divided into two sections, the DCASE2020T2 Development Set (sec-dev) and the DCASE2020T2 Evaluation Set (sec-eval). While both sections cover the same machine types, they differ in their machine sections and therefore exhibit distinct acoustic characteristics. All recordings are single-channel at 16 kHz.

\noindent \textbf{DCASE2023T2} and \textbf{DCASE2024T2.}
These benchmarks \cite{dohi_description_2023, Harada2021, Dohi2022, Nishida_arXiv2024_01, haradatoyadmos2+, Albertini2024} evaluate domain generalization (DG) by introducing distribution shifts in the reference normal data, resembling a long-tail setting. Each benchmark has a Development Set (sec-dev) and an Evaluation Set (sec-eval). The sec-dev split includes seven machine types from \cite{Dohi2022}, whereas sec-eval differs: DCASE2023T2 uses seven types from \cite{haradatoyadmos2+}, and DCASE2024T2 uses nine types from \cite{haradatoyadmos2+, Albertini2024}. Each machine type corresponds to one section with distinct attributes, although some sections in DCASE2024T2 lack explicit annotations. The sec-dev machine types are similar across benchmarks, while sec-eval machine types are disjoint. All recordings are single-channel at 16 kHz, with durations of 6--18 s (typically 10--12 s).

\subsection{Metrics}
We report performance using the area under the receiver operating characteristic (ROC) curve (AUC) and the partial AUC (pAUC) over a low false positive rate (FPR) range $[0,p]$, following \cite{koizumi_description_2020}. The ROC curve plots the true positive rate (TPR) against the FPR as the anomaly-score threshold is varied. AUC can be interpreted as the probability that a randomly chosen anomalous sample receives a higher anomaly score than a randomly chosen normal sample. pAUC computes the same ranking-based quantity but restricts attention to the ROC segment with $\mathrm{FPR}\le p$, emphasizing performance under strict false-alarm constraints. We use $p=0.1$ \cite{koizumi_description_2020}.

For MIMII and DCASE2020T2 (sec-dev and sec-eval), we follow \cite{local_den_norm_kevin} and report the arithmetic mean of AUC and pAUC, where each metric is first averaged across machine sections. For the domain-shift benchmarks (DCASE2023T2 and DCASE2024T2, sec-dev and sec-eval), we follow the Official Score definition \cite{dohi_description_2023, Nishida_arXiv2024_01}: for each machine type, we compute the harmonic mean of (i) source-domain AUC, (ii) target-domain AUC, and (iii) mixed-domain pAUC. The final Official Score is the harmonic mean across machine types.

We also report the detection sensitivity index $d'(S)=\Delta(S)/\sqrt{\mathrm{Var}(S\mid\mathcal N)}$, which measures score separation between anomalous and normal samples relative to the variability of normal scores; larger $d'$ indicates easier detection. See Appendix~Sec.~\ref{app:setup_scores} for the full SDT definition.

\subsection{Implementation Details}
\label{sub-sec:impl}
\subsubsection{Baseline Training-free Frontend System} 
We use both handcrafted and deep spectrogram features. Log-Mel features use 128 mel bins over 0--8000~Hz with frame length 64 and frame shift 32. MFCCs use 90 coefficients with the same frequency range and frame parameters. For LPC, we extract 60 LPC coefficients using Librosa \cite{brian_mcfee-proc-scipy-2015}; for LPC spectrum, we convert these 60 coefficients into an 8000-bin frequency response up to Nyquist. Each audio segment is zero-padded or truncated to 10~s. For deep features, we use frozen embeddings from the last Transformer layer of BEATs iter3 \cite{chen_beats_2022}, EAT \cite{ijcai2024p421} (epoch~30), and EAT-large (epoch~20). The model sizes are approximately 90\,M parameters (BEATs iter3), 85\,M (EAT), and 302\,M (EAT-large). For BEATs iter3 and EAT, input log-mel spectrograms are computed over 20--8000~Hz with frame length 25 and frame shift 10, then standardized using AudioSet statistics \cite{10887974}.

For Log-Mel, MFCC, BEATs, EAT, and EAT-large features, we use Temporal Mean Pooling (\emph{Tmean}) over time, yielding 128-, 90-, 6144-, 6144-, and 8192-dimensional clip-level vectors, respectively. For EAT, we do not use the CLS embedding in order to preserve full band structure, unlike \cite{wilkinghoff2025density}.

\begin{table}[!t]
\centering
\caption{\textit{BEAM} windowing parameters ($C$: window dim, $s$: stride, $N_b$: number of windows (bands)).}
\label{tab:winspec_params}
\begin{tabular}{lccc}
\toprule
Feature & Dataset & $(C, s)$ & $N_b$ \\
\midrule
Log-Mel & DCASE2020T2 & (38, 38) & 4 \\
        & DCASE2023T2 & (38, 38) & 4 \\
        & DCASE2024T2 & (76, 76) & 2 \\
MFCC    & All & (20, 20) & 5 \\
LPC spectrum & All & (3200, 3200) & 3 \\
BEATs \cite{chen_beats_2022} & All & (768, 768) & 8 \\
BEATs-FT   & All & (768, 768) & 8 \\
\bottomrule
\end{tabular}
\vspace{-5mm}
\end{table}

\subsubsection{Proposed Training-Free Frontend System}
For Log-Mel, MFCC, and BEATs features, we use mean aggregation in the Dynamic Mean--Max module; this DMM variant is denoted \textit{AdaBEAM}. In all cases except LPC spectrum, \textit{BEAM} is applied to the \emph{Tmean} representation and can be combined with DMM to form \textit{AdaBEAM}. The band/windowing parameters $(C,s,N_b)$ for each feature type and dataset are summarized in Table~\ref{tab:winspec_params}. For handcrafted features, we form bands by slicing the feature vector of length $F$ (after temporal pooling when applicable) into length-$C$ segments with stride $s$; if $(F-C)$ is not divisible by $s$, we append one final \emph{end-aligned} segment starting at $F-C$ (indices $[F-C,\,F)$), which may overlap with the preceding segment.

\input{results/main_training_free}

\subsubsection{Additional Finetuned Frontend System}
Under the same settings, we evaluate \textit{BEAM} and \textit{AdaBEAM} on fine-tuned BEATs encoders (SSL checkpoint iter3), adapting them to the target distribution using meta-information available from the reference normal data, following the AnoPatch recipe~\cite{jiang_anopatch_2024}. We perform full fine-tuning (updating all encoder parameters) with the AnoPatch outlier-exposure classification objective guided by this meta-information. To ensure reproducibility, we only instantiate the implementation choices that are not fully specified in~\cite{jiang_anopatch_2024} and do not tune them beyond the official development split: (i) audio preprocessing, where each segment is zero-padded or truncated to 10\,s without standardization, enhancement, or segmentation; (ii) an attentive statistics pooling layer producing a 768-dimensional representation; and (iii) an ArcFace loss~\cite{arcface_8953658} with margin $0.5$ and scale $64$. All remaining hyperparameters follow~\cite{jiang_anopatch_2024} (or the AnoPatch public configuration, when available), and we perform no additional tuning.

Models are trained for 10k steps with a batch size of 32, gradient accumulation of 8, and the AdamW optimizer~\cite{Loshchilov2017DecoupledWD} with a learning rate of $1\times10^{-4}$ and a linear scheduler (960 warm-up steps). Following the AnoPatch pipeline~\cite{jiang_anopatch_2024}, we use SpecAugment~\cite{Park2019SpecAugmentAS} with a time mask of 80 and a frequency mask of 40 for DCASE2020T2, and a frequency mask of 80 for DCASE2023T2 and DCASE2024T2. Similar to~\cite{wilkinghoff2025density, fujimura2025asdkit}, classification targets correspond to machine sections in DCASE2020T2 and to machine attributions in DCASE2023T2 and DCASE2024T2, with 41, 167, and 141 classes, respectively. For simplicity and consistency, we use the same preprocessing as in the training-free frontend system. 

We do not tune training hyperparameters, and instead follow the main settings from~\cite{jiang_anopatch_2024} for fair comparison. Nevertheless, frozen BEATs outperform fine-tuning on DCASE2023T2 and DCASE2024T2, consistent with~\cite{wilkinghoff2025density}. Accordingly, we extract embeddings from intermediate layers (layer~6 and layer~8, respectively) to mitigate pretraining--downstream mismatch~\cite{10887974}. Because sec-eval contains machine types absent from sec-dev in these DG benchmarks, we follow the sec-dev selection protocol in~\cite{wilkinghoff2025density}, yielding a first-shot setting where tuning uses disjoint machine types. For DCASE2020T2, we use the last layer by default.

\subsubsection{Anomaly Detector.} 
Following \cite{wilkinghoff2025density}, we use $K=1$ as the LDN neighborhood size in our main experiments. We also explore the effect of varying $K$ for sub-band normalization (Section~\ref{sec:detectors}). 

%% file: results/main_training_free.tex
\definecolor{sectionrow}{gray}{0.93}
\begin{table*}[!ht]
\centering
\caption{Comparison of our training-free frontend systems with previously reported results that do not use meta-information (auxiliary labels) for training or fine-tuning. This setting assumes access only to normal data from the same machine type (in both source and target domains); thus task-specific training on normal data (e.g., AE-style methods) is permitted, while methods that rely on auxiliary labels are excluded. Official scores are reported on DCASE2020T2, DCASE2023T2, and DCASE2024T2; we use the arithmetic mean (Amean) for DCASE2020T2 and the harmonic mean (Hmean) for DCASE2023T2 and DCASE2024T2.}
\label{tab:main_without_label}
\setlength{\tabcolsep}{3.5pt}
\begin{tabular}{l@{\hskip 0.3cm}ll@{\hskip 0.5cm}ll@{\hskip 0.5cm}ll}
\toprule & 
\multicolumn{2}{c}{DCASE2020T2} & 
\multicolumn{2}{c}{DCASE2023T2 (DG)} & 
\multicolumn{2}{c}{DCASE2024T2 (DG)} \\
\cmidrule(lr){2-3} \cmidrule(lr){4-5} \cmidrule(lr){6-7}
 Method & \shortstack{sec-dev\\{\scriptsize (Amean)}} & \shortstack{sec-eval\\{\scriptsize (Amean)}} & \shortstack{sec-dev\\{\scriptsize (Hmean)}} & \shortstack{sec-eval\\{\scriptsize (Hmean)}} & \shortstack{sec-dev\\{\scriptsize (Hmean)}} & \shortstack{sec-eval\\{\scriptsize (Hmean)}} \\
\midrule
\rowcolor{sectionrow}\multicolumn{7}{l}{Systems requiring task-specific training}\\
AE \cite{koizumi_description_2020, dohi_description_2023, Nishida_arXiv2024_01} &66.6 &70.0 &56.9 &61.1 &55.4 &56.5 \\
AE-IDNN \cite{suefusa_anomalous_2020} &71.5 &- &- &- &- &- \\
ANP-IDNN \cite{wichern_anomalous_2021} &71.8 &75.6 &- &- &- &- \\
PAE \cite{10095568} &74.2 &- &- &- &- &- \\
AudDSR \cite{zavrrtanki_10447941} &78.2 &- &- &- &- &- \\
UnlabeledASD \cite{10890020} (PL) &- &- &63.5 \std{0.4} &65.5 \std{0.5} &- &- \\
\midrule
\rowcolor{sectionrow}\multicolumn{7}{l}{Training-free systems (no task-specific training)}\\
\multicolumn{7}{c}{\textbf{\textit{Handcrafted feature: Log-Mel}}} \\
Global matching baseline & 73.1 & 71.9 & 51.7 & 60.5 & 57.4 & 57.4 \\
\hspace{0.5em}+ DMM & 77.4 (+4.3) & 74.9 (+3.0) & 53.3 (+1.6) & 61.9 (+1.4) & 58.5 (+1.1) & 58.0 (+0.6) \\
BEAM (proposed) & 76.7 (+3.6) & 75.4 (+3.5) & 56.3 (+4.6) & 67.8 (+7.3) & 57.9 (+0.5) & 57.9 (+0.5) \\
AdaBEAM (proposed) & 80.6 (+7.5) & 78.2 (+6.3) & 56.9 (+5.2) & 67.9 (+7.4) & 58.5 (+1.1) & 58.1 (+0.7) \\
\midrule
\multicolumn{7}{c}{\textbf{\textit{Handcrafted feature: MFCC}}} \\
Global matching baseline & 74.2 & 72.5 & 52.5 & 64.3 & 57.3 & 54.5 \\
\hspace{0.5em}+ DMM & 76.6 (+2.4) & 76.4 (+3.9) & 53.0 (+0.5) & 63.4 (-0.9) & 58.5 (+1.2) & 53.8 (-0.7) \\
BEAM (proposed) & 77.1 (+2.9) & 75.4 (+2.9) & 56.8 (+4.3) & 64.8 (+0.5) & 58.2 (+0.9) & 56.6 (+2.1) \\
AdaBEAM (proposed) & 77.6 (+3.4) & 76.5 (+4.0) & 57.0 (+4.5) & 64.6 (+0.3) & 58.7 (+1.4) & 56.5 (+2.0) \\
\midrule
\multicolumn{7}{c}{\textbf{\textit{Handcrafted feature: LPC Spectrum}}} \\
Global matching baseline & 76.8 & 76.4 & 58.6 & 64.6 & 56.2 & 55.6 \\
BEAM (proposed) & 77.9 (+1.1) & 80.0 (+3.6) & 60.6 (+2.0) & 66.2 (+1.6) & 58.5 (+2.3) & 57.3 (+1.7) \\
\midrule

\multicolumn{7}{c}{\textbf{\textit{Deep feature: EAT}}} \\
Global matching baseline & 77.8 & 77.7 & 61.3 & 66.0 & 57.1 & 58.9 \\
\hspace{0.5em}+ DMM & 78.5 (+0.7) & 78.6 (+0.9) & 61.6 (+0.3) & 66.2 (+0.2) & 57.3 (+0.2) & 59.1 (+0.2) \\
BEAM (proposed) & 84.3 (+6.5) & 86.1 (+8.4) & 63.9 (+2.6) & 71.7 (+5.7) & 60.2 (+3.1) & 61.7 (+2.8) \\
AdaBEAM (proposed) & 84.4 (+6.6) & 86.5 (+8.8) & 63.9 (+2.6) & 71.5 (+5.5) & 60.2 (+3.1) & 61.7 (+2.8) \\
\midrule
\multicolumn{7}{c}{\textbf{\textit{Deep feature: EAT-Large}}} \\
Global matching baseline (reported in \cite{wilkinghoff2025density}) & 79.3 & 79.9 & 65.0 & 66.1 & 56.8 & 58.9 \\
Global matching baseline (our implementation) & 78.2 & 78.0 & 61.5 & 64.7 & 56.8 & 59.8 \\
\hspace{0.5em}+ DMM & 79.1 (+0.9) & 79.5 (+1.5) & 61.9 (+0.4) & 65.0 (+0.3) & 57.3 (+0.5) & 60.3 (+0.5) \\
BEAM (proposed) & 84.2 (+6.0) & 85.3 (+7.3) & 64.0 (+2.5) & 71.2 (+6.5) & 59.8 (+3.0) & 62.6 (+2.8) \\
AdaBEAM (proposed) & 84.7 (+6.5) & 86.1 (+8.1) & 64.2 (+2.7) & 71.1 (+6.4) & 59.9 (+3.1) & 63.0 (+3.2) \\
\midrule
\multicolumn{7}{c}{\textbf{\textit{Deep feature: BEATs-iter3}}} \\
Global matching baseline (reported in \cite{10887974}) & 74.7 &- &- &73.8 &- &- \\
Global matching baseline (reported in \cite{wilkinghoff2025density})  & 81.5 & 82.2 & 64.8 & 67.6 & 58.1 & 62.4 \\
Global matching baseline (our implementation) & 82.1 & 82.7 & 63.8 & 70.8 & 57.7 & 60.0 \\
\hspace{0.5em}+ DMM & 82.8 (+0.7) & 83.9 (+1.2) & 64.0 (+0.2) & 70.9 (+0.1) & 57.8 (+0.3) & 61.7 (+1.7) \\
BEAM (proposed)  & 86.9 (+4.8) & 87.8 (+5.1) & 66.1 (+2.3) & 73.1 (+2.3) & 62.2 (+4.5) & 62.6 (+2.6) \\
AdaBEAM (proposed) & 87.6 (+5.5) & 88.5 (+5.8) & 66.3 (+2.5) & 73.1 (+2.3) & 62.4 (+4.7) & 62.6 (+2.6) \\
\bottomrule

\end{tabular}
\end{table*}

%% file: single_experiments.tex
Section~\ref{sec:dcase_benchmarks} presents the main training-free benchmark results on DCASE Task~2 (2020/2023/2024) under the implementation described in Section~\ref{sub-sec:impl}, links the gains to variance reduction and sensitivity, and reports fine-tuned encoder performance as a secondary reference. Section~\ref{sec:detectors} then provides diagnostics on when the improvements hold, including the effects of sub-band window size, normalization neighborhood size $K$, and the roles of \textit{BEAM} and \textit{AdaBEAM}. Finally, Section~\ref{sec:res_mimii_lpc} presents a complementary MIMII study of clip-level feature behavior under varying noise conditions.

\begin{figure*}[!ht]
    \centering
    \includegraphics[width=0.9\textwidth]{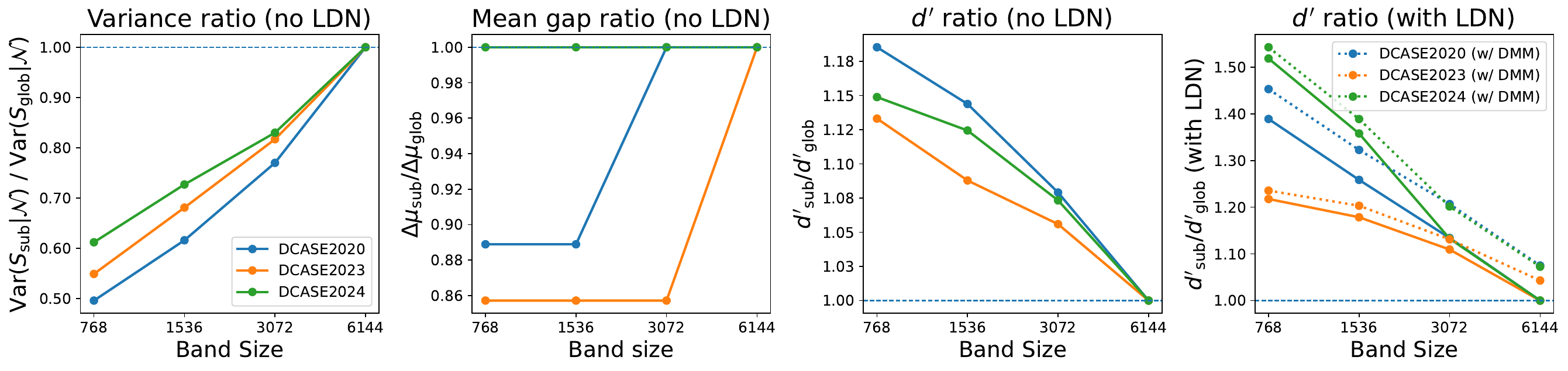}
    \caption{Quantitative results for the theoretical analysis using BEATs iter3 features, averaged across machine types for each benchmark (DCASE2020T2, DCASE2023T2, and DCASE2024T2). (Left) Variance ratio (no LDN). (Center-left) Mean gap ratio (no LDN). (Center-right) $d'$ ratio (no LDN). (Right) $d'$ ratio with Local Density Normalization (LDN), including additional results with DMM.}
    \label{fig:theory_main_results}
    \vspace{-2mm}
\end{figure*}

\input{results/main_finetuning}

\subsection{Main Results and Findings}
\label{sec:dcase_benchmarks}
Table~\ref{tab:main_without_label} compares our training-free systems with previously reported results that do not use meta-information (i.e., auxiliary labels) for representation learning. Across DCASE Task~2 benchmarks, \textit{BEAM} generally improved over tied-reference global matching across multiple front ends, with the largest gains on BEATs features (see Sec.~\ref{sec:detectors} for further analysis). \textit{AdaBEAM} further improved spectrogram-derived features by fusing temporal mean and temporal max views via DMM, although gains from the temporal-max view were limited in some settings (e.g., DCASE2023T2), indicating that temporal max pooling was not uniformly beneficial. DCASE2023T2 sec-eval was a notable exception, where the domain-wise score-normalization baseline~\cite{10887974} reported particularly strong results under a protocol with machine-type-specific layer selection. Handcrafted features also remained competitive with methods that relied on task-specific training while keeping the overall pipeline training-free. Overall, our method achieved strong training-free performance, improving over the global-matching design by a large margin on both handcrafted and deep neural front ends without machine-type-specific tuning.

Figure~\ref{fig:theory_main_results} provides empirical support for the SDT mechanism in Sec.~\ref{sec:theory} on BEATs iter3 features using raw scores without LDN or DMM. Band-aligned sub-band matching with uniform aggregation reduced the normal-score variance and increased sensitivity \(d'\) relative to tied-reference global matching. We directly evaluated the sufficient condition in Theorem~1 at each band size using the measured variance ratio and mean-gap ratio. The condition was satisfied at band sizes where the observed \(d'\) ratio \(d'(S_{\mathrm{sub}})/d'(S_{\mathrm{glob}})\) exceeded one. At smaller band sizes, variance ratios dropped substantially while mean gap ratios remained close to one, and \(d'\) typically improved. As the band size increased, the gains diminished, approaching the global baseline. In some settings, sub-band matching slightly reduced the mean gap, but the accompanying reduction in normal spread was larger and \(d'\) still improved. Overall, these trends are consistent with our motivation: smaller sub-bands can reduce tied-reference mismatch via per-region reference selection, while uniform fusion can limit the influence of a few high-energy regions. We base this interpretation on the cross-frontend comparisons in Appendix~\ref{app:quan_d}, which also suggest that most gains come from reducing tied-reference mismatch, while the effect of score fusion depends on the representation and band size.

Although LDN and DMM are not part of the theorem, we observed additional gains from applying LDN and from extending the scoring pipeline to DMM via \textit{AdaBEAM}.

As a secondary reference, Table~\ref{tab:main_with_label} reports settings where task-specific training is feasible using auxiliary labels, with the BEATs encoder fine-tuned via the corresponding classification task. Fine-tuning improved performance over the training-free condition, and applying \textit{BEAM} or DMM still provided additional gains over the fine-tuned global-matching baseline. Notably, training-free BEATs+\textit{AdaBEAM} in Table~\ref{tab:main_without_label} was already competitive with fine-tuned global-matching baselines on DCASE2023T2 and DCASE2024T2; for instance, it outperformed BEATs-LoRA~\cite{fujimura2025asdkit} (spatial pooling) on DCASE2023T2 (66.3/73.1 vs. 62.7/71.5) and DCASE2024T2 (62.4/62.6 vs. 59.5/61.7). On DCASE2020T2, our fine-tuned \textit{AdaBEAM} achieved Official Scores of 95.6 (sec-dev) and 96.2 (sec-eval), outperforming prior state-of-the-art systems, including SSL4ASD~\cite{wilkinghoff2025density} (using 10 ensembles) and AnoPatch~\cite{jiang_anopatch_2024}. On DCASE2023T2 and DCASE2024T2, our fine-tuned results matched or slightly underperformed the strongest existing methods, with sec-dev/sec-eval discrepancies and LDN-related degradation on DCASE2024T2 sec-eval consistent with~\cite{wilkinghoff2025density}. This indicates sensitivity to training strategy, so we report one fixed configuration (Section~\ref{sub-sec:impl}) and leave systematic tuning to future work. Applying Domain-Wise Standardized Scoring~\cite{10887974} avoided this effect and yielded stronger results, but it requires domain labels to handle domain shift.

\begin{figure*}[!t]
    \centering
    \includegraphics[width=0.85\textwidth]{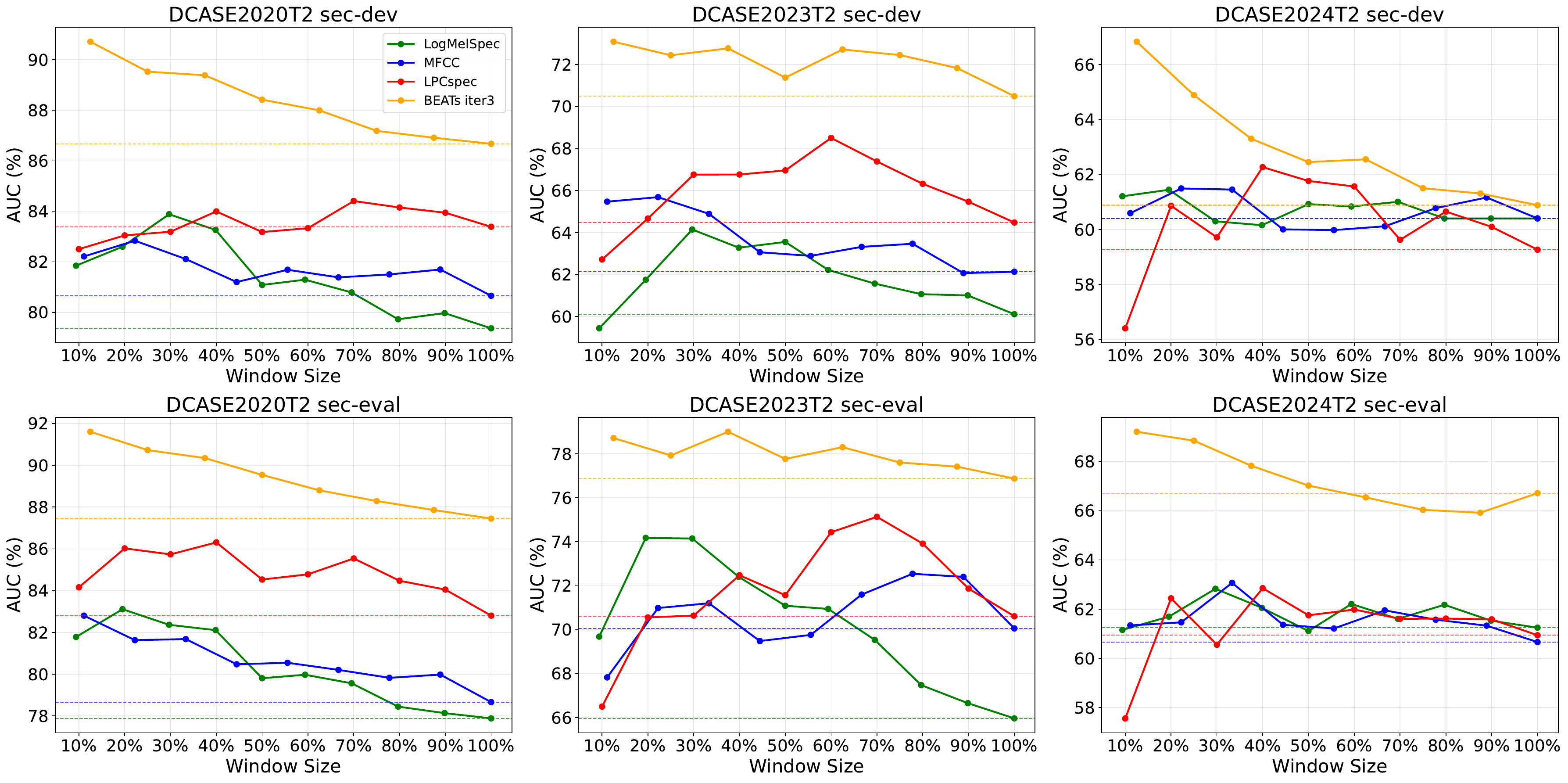}
\caption{Comparison of sub-band window sizes of \textit{BEAM} on handcrafted and deep features (BEATs) in terms of AUC across DCASE2020T2, DCASE2023T2, and DCASE2024T2. Window size is reported as a fraction of the feature length along the windowed axis. For handcrafted features, we use non-overlapping windows for 10\%--40\% (stride equals window length); for 50\%--90\%, we use a two-window full-coverage rule with an end-aligned final window (equivalently, stride $=F-C$), which may overlap with the first window. For BEATs, windows are formed by grouping a fixed number of patch embeddings (8 patches total), thus only window sizes aligned to the patch grid are meaningful. Dotted horizontal lines show the baseline Tmean result.}
    \label{fig:window_size_comparison}
\end{figure*}

\subsection{Design Analysis}
\label{sec:detectors}

\noindent \textbf{Effects of Dynamic Mean–Max Scoring.}
DMM compared scores from temporal mean and temporal max memory banks, testing whether transient events (captured by Tmax) complemented stable, long-term statistics (captured by Tmean). In Table~\ref{tab:main_without_label}, DMM yielded consistent gains over Tmean for k-NN+LDN across most feature types. When combined with sub-band matching (\textit{AdaBEAM}), DMM typically provided an additional improvement over \textit{BEAM} using Tmean alone, indicating that view fusion remained beneficial even after variance was reduced via sub-band reference selection.

\begin{figure*}[!t]
    \centering
    \includegraphics[width=0.85\textwidth]{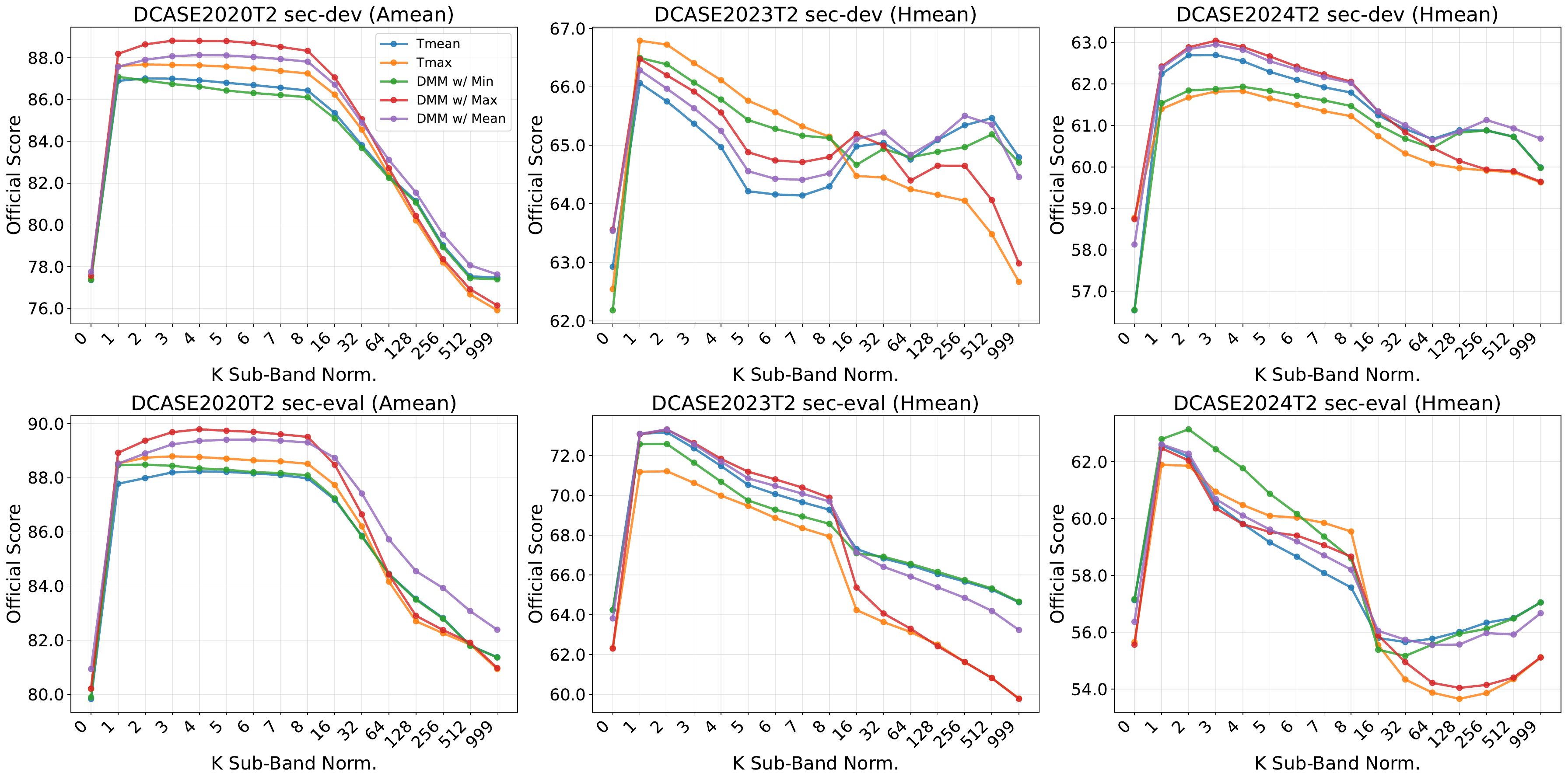}
    \caption{Effect of varying $K$ in sub-band density normalization on Official Scores across DCASE2020T2, DCASE2023T2, and DCASE2024T2. Results are shown for Tmean, Tmax, and DMM variants using BEATs features.}
    \label{fig:knorm_comparison}
    \vspace{-2mm}
\end{figure*}

\noindent \textbf{Analysis of Sub-Band Window Sizes.}
Figure~\ref{fig:window_size_comparison} analyzes the effect of sub-band granularity in \textit{BEAM}. For handcrafted features, sub-band matching improved over the tied-reference global baseline across many window sizes, but the optimum was setting-dependent: very small windows (10\%) underperformed in specific cases (e.g., LPC in some settings and MFCC on DCASE2023T2 sec-dev), whereas small-to-moderate windows were more consistently beneficial (Log-Mel/MFCC often around 20\%--30\%, LPC around 40\%). BEATs showed a clearer trend, typically peaking at 10\% and moving toward the global baseline as windows grew, with one exception on DCASE2024T2 sec-eval where large windows (60\%--90\%) underperformed. We attribute the stability of small windows for BEATs to patch-level invariances learned during large-scale pre-training, whereas handcrafted features can exhibit higher variance in fine partitions, which may over-suppress the mean gap and reduce \(d'\) (see Appendix~Sec.~\ref{app:quan_d}).

\noindent \textbf{Impact of $K$ for Sub-Band Density Scoring.} 
Figure~\ref{fig:knorm_comparison} shows how the neighborhood size $K$ in sub-band density normalization affected Official Score for Tmean, Tmax, and DMM variants using BEATs features. Across datasets, gains from LDN were dataset- and split-dependent and were typically concentrated at small neighborhoods (around $K=1$--4), while increasing $K$ generally degraded performance. On DCASE2020T2, several methods improved when moving from $K=1$ to $K=2$--4; in contrast, on DCASE2023T2 sec-dev the best results were already attained at $K=1$, and DCASE2023T2 sec-eval and DCASE2024T2 similarly favored very small $K$. We also observed a consistent interaction with DMM: as $K$ grew, DMM with Max approached the Tmax branch and DMM with Min approached the Tmean branch, because large-$K$ normalization reduced local-density variation and the aggregation rule effectively selected one of the two view-specific score distributions. Overall, DMM with Mean and DMM with Max appeared to perform consistently well across datasets relative to the other variants.

\input{results/analysis_winspec_scoring}
\noindent \textbf{Sub-Band Scoring Strategy.}
Table~\ref{tab:ablate_WinSpec} shows sub-band score aggregation rules for \textit{BEAM} using BEATs features. Averaging across sub-bands consistently performed best, indicating that integrating evidence from all regions was more reliable than selecting a single extreme. Maximum aggregation was usually weaker than averaging, while minimum aggregation caused a substantial drop, suggesting that it could ignore anomalous regions by focusing on the easiest-matching sub-band. Comparing \textit{BEAM} applied to temporal mean vs. temporal max memory banks further showed that each view could be preferable depending on the dataset; combining them with DMM (mean aggregation) therefore provided the most robust overall scoring.

\begin{figure}[!t]
    \centering
    \includegraphics[width=0.97\columnwidth]{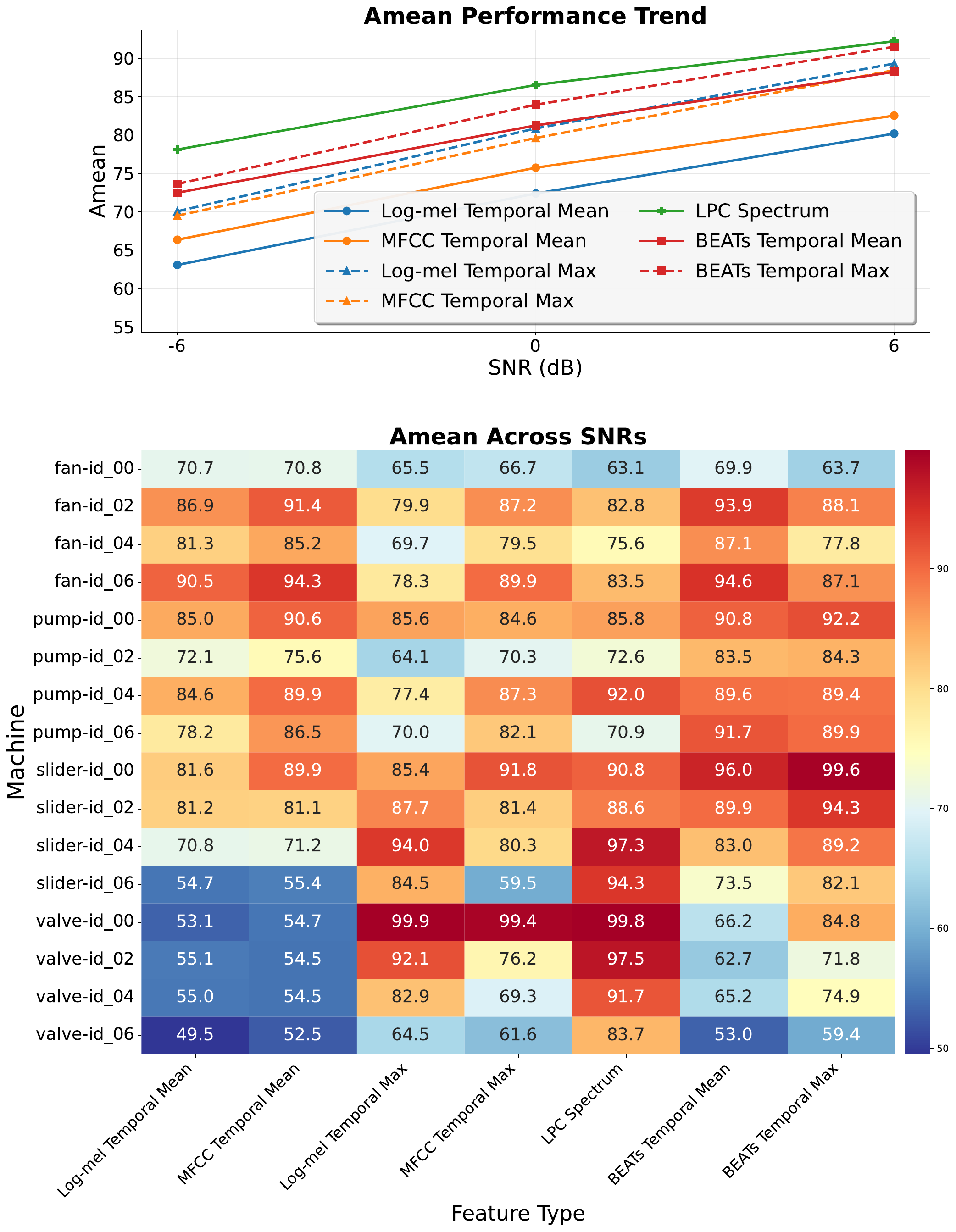}
    \caption{Comparison of features on the MIMII dataset.}
    \label{fig:mimii_compare_feats}
\end{figure}

\begin{table}[!t]
\centering
\caption{Computational cost per 10\,s clip (mean $\pm$ std over $n_{\mathrm{runs}}=5$). Feature-extraction time measures waveform$\rightarrow$f, and end-to-end (E2E) time measures waveform$\rightarrow$anomaly score, including $1$-NN scoring against a memory bank of $500$ normal clips. Handcrafted frontends are computed on CPU only in this benchmark. CPU: AMD Ryzen 5 5600X. GPU: NVIDIA GeForce RTX 3090. For BEATs, the encoder complexity is reported as MACs/Params (G/M).}
\label{tab:cost_frontends}
\resizebox{\columnwidth}{!}{%
\begin{tabular}{l r r r c}
\toprule
Frontend & Output dim & Feat. (ms) & E2E (ms) & MACs/Params (G/M) \\
\midrule
Log-Mel (Tmean)                  & 128  & $3.95 \pm 0.21$  & $6.20 \pm 0.25$   & N/A \\
MFCC (Tmean)                     & 90   & $3.95 \pm 0.01$  & $7.95 \pm 1.08$   & N/A \\
LPC spectrum                     & 8000 & $31.34 \pm 0.05$ & $41.11 \pm 0.15$  & N/A \\
BEATs iter3 (Tmean, CPU)         & 6144 & $271.15 \pm 6.06$& $494.02 \pm 6.19$ & $45.01/90.71$ \\
BEATs iter3 (Tmean, GPU)         & 6144 & $13.60 \pm 0.14$ & $17.91 \pm 0.14$  & $45.01/90.71$ \\
\bottomrule
\end{tabular}%
}
\end{table}

\subsection{MIMII: Clip-Level Feature Analysis}
\label{sec:res_mimii_lpc}
We analyzed clip-level representations on the MIMII dataset~\cite{purohit_mimii_2019}, which included both stationary (Fan, Pump) and non-stationary (Slider, Valve) machine types under varying SNR conditions. Figure~\ref{fig:mimii_compare_feats} compares different front ends using tied-reference global matching (standard $k$-NN without applying local density normalization~\cite{10887974}). Among the evaluated handcrafted features, the LPC spectrum performed particularly well and, in this setting, even surpassed deep BEATs features. We also found that temporal max pooling consistently outperformed temporal mean pooling, highlighting the importance of retaining transient information for non-stationary machines.

These trends can be explained by how LPC relates to conventional spectral features. Like max-pooled Log-Mel or MFCCs, LPC emphasized prominent spectral peaks while modeling a smooth spectral envelope that captured longer-term structure in stationary sounds. At the same time, because LPC estimated a global envelope, it could be more sensitive to background noise in low-SNR stationary conditions, where band-wise energy integration and smoothing with Log-Mel or MFCCs could be more robust. Overall, LPC-based representations offered a complementary trade-off: they could outperform pooled Log-Mel and MFCC baselines while remaining lightweight, non-parametric, and interpretable by revealing frequency regions where anomalies occurred.

Table~\ref{tab:cost_frontends} reports per-clip inference cost from waveform to anomaly score. Handcrafted front ends were inexpensive on CPU: Log-Mel and MFCC required single-digit milliseconds end-to-end, and LPC-based features required tens of milliseconds. BEATs was about two orders of magnitude slower on CPU due to neural-encoder inference (45.01\,G MACs, 90.71\,M parameters) and reached comparable latency only on GPU. Together with the MIMII results (Figure~\ref{fig:mimii_compare_feats}) and the main benchmark results (Table~\ref{tab:main_without_label}), these numbers indicate that handcrafted representations remained competitive in the training-free setting while providing a strong efficiency baseline for CPU-only deployments.

%% file: results/main_finetuning.tex
\begin{table*}[!h]
\centering
\caption{Comparison of fine-tuned frontend systems that use meta-information (auxiliary labels) during task-specific training, together with previously reported results from the literature. Official scores are reported on DCASE2020T2, DCASE2023T2, and DCASE2024T2; scores for our approach are averaged over five independent trials. We use the arithmetic mean (Amean) for DCASE2020T2 and the harmonic mean (Hmean) for DCASE2023T2 and DCASE2024T2.}
\label{tab:main_with_label}
\setlength{\tabcolsep}{3.5pt}
\begin{tabular}{l@{\hskip 0.3cm}ll@{\hskip 0.5cm}ll@{\hskip 0.5cm}ll}
\toprule
& 
\multicolumn{2}{c}{DCASE2020T2} & 
\multicolumn{2}{c}{DCASE2023T2 (DG)} & 
\multicolumn{2}{c}{DCASE2024T2 (DG)} \\
\cmidrule(lr){2-3} \cmidrule(lr){4-5} \cmidrule(lr){6-7}
 Method & \shortstack{sec-dev\\{\scriptsize (Amean)}} & \shortstack{sec-eval\\{\scriptsize (Amean)}} & \shortstack{sec-dev\\{\scriptsize (Hmean)}} & \shortstack{sec-eval\\{\scriptsize (Hmean)}} & \shortstack{sec-dev\\{\scriptsize (Hmean)}} & \shortstack{sec-eval\\{\scriptsize (Hmean)}} \\
\midrule
\rowcolor{sectionrow}\multicolumn{7}{l}{Prior work with global matching design} \\
\addlinespace[2pt]
SSL4ASD w/ LDN \cite{local_den_norm_kevin, wilkinghoff2025density} &90.7 &90.2 &68.4 &68.0 &62.0 &54.7 \\
\hspace{0.5em}+ 10 ensembles &94.2 &93.3 &71.3 &72.4 &65.2 &56.5 \\
AnoPatch \cite{jiang_anopatch_2024}(BEATs Full FT) &90.9 &94.3 &64.2 &74.2 &62.5 \std{0.8} &65.6 \std{1.1} \\
\hspace{0.5em}+ pseudo labeling \cite{10889514} &- &- &- &- &64.1 \std{0.6} &66.0 \std{0.7} \\
AnoPatch-LoRA \cite{anopatch_lora} (BEATs-LoRA) &- &- &67.3 &77.2 &- &- \\
UnlabeledASD \cite{10890020} &- &- &67.2 \std{0.7} &68.8 \std{0.9} &- &- \\
ASDKit \cite{fujimura2025asdkit} (BEATs-LoRA) &90.6 \std{0.2} &93.5 \std{0.3} &62.7 \std{0.7} &71.5 \std{0.5} &59.5 \std{0.9} &61.7 \std{0.7} \\
\midrule
\rowcolor{sectionrow}\multicolumn{7}{l}{This work (BEATs-iter3 Full FT)\textsuperscript{*}} \\
\addlinespace[2pt]
\multicolumn{7}{c}{\textbf{\textit{Normalization: LDN (default)}}} \\
Global matching baseline &91.6 \std{0.2} & 93.0 \std{0.6} & 59.2 \std{0.5} & 71.9 \std{0.2} & 62.0 \std{1.2} & 57.8 \std{0.5}\\
\hspace{0.5em}+ DMM & 94.6 \std{0.1} & 95.2 \std{0.3} & 61.1 \std{0.1} & 72.0 \std{0.5} & 62.4 \std{0.4} & 57.0 \std{0.5}\\
BEAM (proposed) &93.0 \std{0.3} &94.0 \std{0.6} & 61.7 \std{0.4} & 75.3 \std{0.2} & 63.5 \std{0.5} & 59.4 \std{0.6}\\
AdaBEAM (proposed) &95.6 \std{0.1} & 96.2 \std{0.2} & 63.1 \std{0.6} & 74.1 \std{0.6} & 62.9 \std{0.4} & 59.9 \std{0.7}\\
\midrule
\multicolumn{7}{c}{\textbf{\textit{Normalization: Domain-wise standardized scoring}}} \\
Global matching baseline &92.0 \std{0.1} &93.9 \std{0.3} &62.1 \std{0.3} &75.5 \std{0.7} &61.0 \std{0.7} &63.1 \std{0.4} \\
\hspace{0.5em}+ DMM  &92.0 \std{0.1} &93.7 \std{0.4} &62.5 \std{0.3} &75.4 \std{1.3} &61.9 \std{0.5} &63.5 \std{0.3} \\
BEAM (proposed) &92.3 \std{0.1} &93.8 \std{0.3} &62.8 \std{0.1} &76.4 \std{0.5} &62.2 \std{0.4} &62.9 \std{0.2} \\
AdaBEAM (proposed) &92.2 \std{0.2} &93.4 \std{0.5} &63.4 \std{0.3} &75.7 \std{0.7} &62.9 \std{0.4} &63.4 \std{0.2} \\

\bottomrule

\addlinespace[2pt]
\multicolumn{7}{l}{\scriptsize\textsuperscript{*}Averaged over five independent trials.} \\

\end{tabular}
\end{table*}

%% file: results/analysis_winspec_scoring.tex
\begin{table}[!t]
\centering
\caption{Comparison of different \textit{BEAM} sub-band scoring approaches using BEATs features. Higher Official Scores indicate better performance.}
\label{tab:ablate_WinSpec}
\resizebox{\columnwidth}{!}{%
\begin{tabular}{lcccccc}
\toprule
 & \multicolumn{2}{c}{DCASE2020T2} & \multicolumn{2}{c}{DCASE2023T2 (DG)} & \multicolumn{2}{c}{DCASE2024T2 (DG)} \\
\cmidrule(lr){2-3} \cmidrule(lr){4-5} \cmidrule(lr){6-7}
Method & sec-dev & sec-eval & sec-dev & sec-eval & sec-dev & sec-eval \\
\midrule
\multicolumn{7}{l}{\textit{Baseline w/o Local Density Normalization (apply \cite{10887974})}} \\
Tmean &75.8 &77.5 &62.6 &71.7 &56.3 &59.0 \\
\hspace{0.2em} w/ Sub-Band \\ 
\hspace{0.2em} Average & 77.5 & 80.3 & 63.1 & 72.7 & 59.4 & 60.0 \\ 
\hspace{0.2em} Maximum & 78.0 & 80.7 & 62.7 & 69.9 & 57.3 & 59.4 \\
\hspace{0.2em} Minimum & 69.9 & 71.5 & 60.7 & 66.2 & 56.1 & 56.0 \\
\midrule
\multicolumn{7}{l}{Sub-Band Tmean} \\
Average & 86.9 & 87.8 & 66.1 & 73.1 & 62.2 & 62.6 \\
Maximum   & 84.0 & 85.4 & 61.7 & 72.4 & 62.1 & 61.4 \\
Minimum   & 78.1 & 78.9 & 63.5 & 66.1 & 59.9 & 58.3 \\
\midrule
\multicolumn{7}{l}{Sub-Band Tmax} \\
Average & 87.5 & 88.5 & 66.8 & 71.2 & 61.3 & 61.9 \\ 
Maximum   & 84.8 & 86.7 & 61.2 & 69.5 & 61.8 & 60.4 \\
Minimum   & 76.9 & 78.2 & 63.4 & 65.6 & 57.3 & 57.5 \\
\midrule
\multicolumn{7}{l}{Sub-Band DMM w/ Min} \\
Average & 87.0 & 88.4 & 66.5 & 72.6 & 61.4 & 62.8  \\ 
Maximum   & 85.3 & 87.2 & 61.7 & 69.8 & 62.0 & 61.1 \\
Minimum   & 77.8 & 78.6 & 63.3 & 67.2 & 59.9 & 58.5 \\
\midrule
\multicolumn{7}{l}{Sub-Band DMM w/ Max} \\
Average & 88.1 & 88.8 & 66.5 & 73.0 & 62.2 & 61.5 \\ 
Maximum   & 84.8 & 87.5 & 60.7 & 72.9 & 62.2 & 59.9 \\
Minimum   & 79.7 & 81.1 & 64.3 & 67.1 & 58.7 & 58.5 \\
\midrule
\multicolumn{7}{l}{Sub-Band DMM w/ Mean} \\
Average & 87.6 & 88.5 & 66.3 & 73.1 & 62.4 & 62.6 \\
Maximum   & 83.9 & 84.5 & 63.1 & 72.6 & 62.1 & 62.5 \\
Minimum   & 79.3 & 80.5 & 63.9 & 66.9 & 59.8 & 58.6 \\
\bottomrule
\end{tabular}
}
\end{table}

%% file: appendix13.tex
\section{Appendix: Why sub-band matching can improve $d'$ over exact global cosine}
\label{app:dprime_reorg}

This appendix explains, in the SDT sense, when the proposed sub-band score can achieve higher sensitivity $d'$ than exact global cosine matching. We define the proposed sub-band score $S_{\mathrm{sub}}$, the baseline global score $S_{\mathrm{glob}}$, and an analytic proxy $S_{\mathrm{uni}}$ that uses uniform band aggregation (as in $S_{\mathrm{sub}}$) while tying all bands to the global reference index (as in $S_{\mathrm{glob}}$). We then bound the normal-class variance of $S_{\mathrm{sub}}$ relative to $S_{\mathrm{glob}}$ and combine it with an exact mean-gap identity to give a sufficient condition for $d'(S_{\mathrm{sub}})\ge d'(S_{\mathrm{glob}})$.

\subsection{Theoretical setup and definitions}
\label{app:setup_scores}

\paragraph{Objective and SDT convention.}
Let $y$ denote a test sample drawn from either the normal class $\mathcal N$ or anomalous class $\mathcal A$.
All quantities below (scores, indices, weights) are random variables defined as measurable functions of $y$.
For any real-valued score $S(y)$, define
\[
d'(S)\triangleq \frac{\Delta(S)}{\sqrt{\mathrm{Var}(S\mid\mathcal N)}},
\qquad
\Delta(S)\triangleq \mathbb E[S\mid\mathcal A]-\mathbb E[S\mid\mathcal N],
\]
where $\mathbb E[\cdot\mid\mathcal C]$ and $\mathrm{Var}(\cdot\mid\mathcal C)$ denote expectation and variance under the conditional distribution of $y$ given class $\mathcal C\in\{\mathcal N,\mathcal A\}$.
We assume the scoring direction is calibrated so that larger scores indicate ``more anomalous'' and focus on the regime $\Delta(S)>0$.

\paragraph{Feature decomposition and cosine conventions.}
Let the test feature be a concatenation of $N_b$ sub-vectors (sub-bands) $f_y=(w_{y,1},\dots,w_{y,N_b})$, where $w_{y,j}\in\mathbb R^{d_j}$. We index normal reference clips by $\mathcal M_{\mathrm{glob}}=\{1,\dots,R\}$ and write each normal template as $f_i=(w_{i,1},\dots,w_{i,N_b})$ for $i\in\mathcal M_{\mathrm{glob}}$. For each band $j$, the corresponding band-aligned memory bank is $\mathcal M_j \triangleq \{w_{i,j}\}_{i\in\mathcal M_{\mathrm{glob}}}$, and sub-band matching selects templates within the same band $j$. We interpret concatenation in orthogonal coordinate blocks, so that
\[
\langle f_y,f_i\rangle=\sum_{j=1}^{N_b} \langle w_{y,j},w_{i,j}\rangle,
\qquad
\|f_y\|^2=\sum_{j=1}^{N_b} \|w_{y,j}\|^2.
\]

Define cosine similarity with the standard convention, extended at zero norm by
\[
\cos(a,b)\triangleq
\begin{cases}
\frac{\langle a,b\rangle}{\|a\|\,\|b\|}, & \|a\|\,\|b\|>0,\\[4pt]
0, & \|a\|\,\|b\|=0,
\end{cases}
\]
and the cosine distance
\[
D(a,b)\triangleq \tfrac12\bigl(1-\cos(a,b)\bigr)\in[0,1].
\]

\paragraph{Globally selected reference (with deterministic tie-breaking).}
Define the global reference index by \emph{picking the smallest index among the maximizers}:
\[
i^*(y)\ \triangleq\ \min\arg\max_{i\in\mathcal M_{\mathrm{glob}}}\ \cos(f_y,f_i).
\]

Since $\mathcal M_{\mathrm{glob}}$ is finite and the tie-breaking rule is deterministic, $i^*(y)$ is well-defined and measurable.

\paragraph{Per-band similarities and distances.}
Define the per-band cosine similarity w.r.t.\ the globally selected template
\[
s_j(y)\triangleq \cos\!\bigl(w_{y,j},w_{i^*(y),j}\bigr),
\]
and the per-band cosine distance to any template
\[
D_{j,i}(y)\triangleq D\!\bigl(w_{y,j},w_{i,j}\bigr)=\tfrac12\Bigl(1-\cos\!\bigl(w_{y,j},w_{i,j}\bigr)\Bigr).
\]

\paragraph{Sub-band score.}
This score allows \emph{per-band template selection} within the band-aligned memory: for each band $j$, we minimize over references $i\in\mathcal M_j$ when computing the cosine distance, and aggregate the resulting per-band distances uniformly:
\[
S_{\mathrm{sub}}(y)\triangleq \frac1{N_b}\sum_{j=1}^{N_b}\min_{i\in\mathcal M_j} D_{j,i}(y).
\]

\paragraph{Uniform tied-reference proxy.}
This proxy \emph{ties} all bands to the same globally selected reference $i^*(y)$ and aggregates per-band distances uniformly, defined by
\[
S_{\mathrm{uni}}(y)\triangleq \frac1{N_b}\sum_{j=1}^{N_b} D_{j,i^*(y)}(y)
=\tfrac12\Bigl(1-\frac1{N_b}\sum_{j=1}^{N_b} s_j(y)\Bigr).
\]

\paragraph{Exact global cosine score (baseline).}
Define the band-energy coupling weights
\[
\beta_j(y)\triangleq
\begin{cases}
\frac{\|w_{y,j}\|\,\|w_{i^*(y),j}\|}{\|f_y\|\,\|f_{i^*(y)}\|}, & \|f_y\|\,\|f_{i^*(y)}\|>0,\\[6pt]
0, & \|f_y\|\,\|f_{i^*(y)}\|=0,
\end{cases}
\]\[
\rho(y)\triangleq \sum_{j=1}^{N_b} \beta_j(y)\le 1.
\]

When $\|f_y\|\,\|f_{i^*(y)}\|>0$, $\rho(y)\le 1$ follows from Cauchy--Schwarz: $\sum_{j=1}^{N_b} \|w_{y,j}\|\,\|w_{i^*(y),j}\|\le \|f_y\|\,\|f_{i^*(y)}\|$; otherwise $\rho(y)=0$. Define normalized weights

\[
\tilde\beta_j(y)\triangleq
\begin{cases}
\beta_j(y)/\rho(y), & \rho(y)>0,\\[4pt]
1/N_{b}, & \rho(y)=0,
\end{cases}
\qquad
\sum_{j=1}^{N_b}\tilde\beta_j(y)=1.
\]

Then the exact global cosine distance to the selected template is
\[
\begin{aligned}
S_{\mathrm{glob}}(y)
&\triangleq \tfrac12\Bigl(1-\cos\!\bigl(f_y,f_{i^*(y)}\bigr)\Bigr)\\
&=\tfrac12\Bigl(1-\rho(y)\sum_{j=1}^{N_b}\tilde\beta_j(y)\,s_j(y)\Bigr).
\end{aligned}
\]

\paragraph{Two deviation terms and their class-conditional gaps.}
We define two auxiliary quantities that isolate how $S_{\mathrm{sub}}$ and $S_{\mathrm{glob}}$ differ from the common proxy $S_{\mathrm{uni}}$: the per-band selection gain $P$ and the energy-weighting deviation $B$.

Define the tied-index mismatch penalty
\[
P(y)\triangleq S_{\mathrm{uni}}(y)-S_{\mathrm{sub}}(y)\ge 0,
\]
where $P(y)\ge 0$ holds since $\min_{i\in\mathcal M_{j}} D_{j,i}(y)\le D_{j,i^*(y)}(y)$ for each $j$.
Define the energy-weighting deviation
\[
B(y)\triangleq
\rho(y)\sum_{j=1}^{N_b} \tilde\beta_j(y)\,s_j(y)\;-\;\frac1{N_b}\sum_{j=1}^{N_b} s_j(y).
\]

By construction, the following pointwise identities hold for all $y$:
\[
S_{\mathrm{sub}}(y)=S_{\mathrm{uni}}(y)-P(y),
\qquad
S_{\mathrm{glob}}(y)=S_{\mathrm{uni}}(y)-\tfrac12\,B(y).
\]
Define class-conditional gaps
\[
\Delta P\triangleq \mathbb E[P\mid\mathcal A]-\mathbb E[P\mid\mathcal N],
\qquad
\Delta B\triangleq \mathbb E[B\mid\mathcal A]-\mathbb E[B\mid\mathcal N].
\]

\paragraph{Regime descriptor.}
Let $\mathcal G$ be a \emph{coarse} sigma-field (a regime descriptor) such that pipeline outputs that may affect variability are $\mathcal G$-measurable (e.g., $i^*(y)$, $\rho(y)$, coarse bins of $\tilde\beta(y)$, operating-condition labels, etc.). We condition on $\mathcal G$ to apply the law of total variance under $\mathcal N$. We will use $\mathcal G$ to decompose within-regime and between-regime variance under $\mathcal N$. We assume \(\mathcal G\) is fixed a priori as an operating-regime descriptor and is not so rich that it determines \(S_{\mathrm{uni}}\), i.e.,
\(\mathbb E[\mathrm{Var}(S_{\mathrm{uni}}\mid\mathcal G,\mathcal N)\mid\mathcal N]>0\).

\subsection{Variance effects and discriminability of the two mechanisms}
\label{app:variance}

\begin{proposition}[Normal-variance ratio from per-sub-band selection and global aggregation]
\label{prop:var_ratio_from_mechanisms}
Under the normal class $\mathcal N$, we upper-bound $\mathrm{Var}(S_{\mathrm{sub}}\mid\mathcal N)$ by a constant multiple of $\mathrm{Var}(S_{\mathrm{glob}}\mid\mathcal N)$. The constant separates the effects of (i) per-band template selection versus a tied global reference and (ii) uniform aggregation versus energy-coupled global aggregation. Assume all variances/covariances below are finite and $\mathrm{Var}(S_{\mathrm{glob}}\mid\mathcal N)>0$.

Let $\mathcal G$ be a coarse sigma-field (regime descriptor) such that
\[
0<\mathbb E\!\left[\mathrm{Var}(S_{\mathrm{uni}}\mid \mathcal G,\mathcal N)\mid\mathcal N\right]<\infty,
\]\[
0<\mathbb E\!\left[\mathrm{Var}(S_{\mathrm{glob}}\mid \mathcal G,\mathcal N)\mid\mathcal N\right]<\infty.
\]

Define
\[
g_0\ \triangleq\
\frac{
\mathbb E\!\left[\mathrm{Var}(S_{\mathrm{glob}}\mid\mathcal G,\mathcal N)\mid\mathcal N\right]
}{
\mathbb E\!\left[\mathrm{Var}(S_{\mathrm{uni}}\mid\mathcal G,\mathcal N)\mid\mathcal N\right]
},
\]\[
\gamma\ \triangleq\
\frac{
\mathrm{Var}\!\left(\mathbb E[S_{\mathrm{uni}}\mid\mathcal G,\mathcal N]\mid\mathcal N\right)
}{
\mathbb E\!\left[\mathrm{Var}(S_{\mathrm{uni}}\mid\mathcal G,\mathcal N)\mid\mathcal N\right]
}.
\]

Define
\[
p_0\ \triangleq\ \frac{\mathrm{Var}(P\mid\mathcal N)}{\mathrm{Var}(S_{\mathrm{glob}}\mid\mathcal N)}\ \ge\ 0,
\]\[
\lambda \ \triangleq\ \Bigl(\frac{-2\,\mathrm{Cov}(S_{\mathrm{sub}},P\mid\mathcal N)}{\mathrm{Var}(S_{\mathrm{glob}}\mid\mathcal N)}\Bigr)_{+},
\qquad (x)_{+}\triangleq \max\{x,0\}.
\]

By construction, $\lambda\ge 0$ and $-2\,\mathrm{Cov}(S_{\mathrm{sub}},P\mid\mathcal N)\le \lambda\,\mathrm{Var}(S_{\mathrm{glob}}\mid\mathcal N)$.

Let
\begin{equation}
\label{eq:Cvar_def}
C_{\mathrm{var}}\ \triangleq\ \frac{1+\gamma}{g_0}+\lambda-p_0.
\end{equation}

Note that $(1+\gamma)/g_0>0$ and $\lambda\ge 0$, thus $C_{\mathrm{var}}\ge 0$ fails only if $p_0$ is large enough to make $\frac{1+\gamma}{g_0}+\lambda-p_0<0$. Assume \(C_{\mathrm{var}}\ge 0\) so that the refined variance factor is nonnegative. Then
\begin{equation}
\label{eq:var_ratio_Cvar}
\mathrm{Var}(S_{\mathrm{sub}}\mid\mathcal N)
\ \le\
C_{\mathrm{var}}\;\mathrm{Var}(S_{\mathrm{glob}}\mid\mathcal N).
\end{equation}

\end{proposition}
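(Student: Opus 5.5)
The plan is a purely algebraic argument. It combines the pointwise identity $S_{\mathrm{sub}}=S_{\mathrm{uni}}-P$ with the law of total variance under $\mathcal N$ with respect to the regime descriptor $\mathcal G$. No distributional assumptions beyond finiteness of the second moments should be needed. Throughout, all variances and covariances are taken conditionally on $\mathcal N$, and I suppress that conditioning in the notation.

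The first step handles the selection term exactly. From $S_{\mathrm{uni}}=S_{\mathrm{sub}}+P$ we get
\[
\mathrm{Cov}(S_{\mathrm{uni}},P)=\mathrm{Cov}(S_{\mathrm{sub}},P)+\mathrm{Var}(P).
\]
Expanding $\mathrm{Var}(S_{\mathrm{sub}})=\mathrm{Var}(S_{\mathrm{uni}}-P)$ and substituting this covariance gives the identity
\[
\mathrm{Var}(S_{\mathrm{sub}})=\mathrm{Var}(S_{\mathrm{uni}})-2\,\mathrm{Cov}(S_{\mathrm{sub}},P)-\mathrm{Var}(P).
\]
The key choice here is to write the cross term with $S_{\mathrm{sub}}$ rather than $S_{\mathrm{uni}}$. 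This is exactly what produces a $-p_0$ term with a negative sign, matching the definition of $C_{\mathrm{var}}$. By the definition of $\lambda$ as a positive part, $-2\,\mathrm{Cov}(S_{\mathrm{sub}},P)\le\lambda\,\mathrm{Var}(S_{\mathrm{glob}})$. Also $\mathrm{Var}(P)=p_0\,\mathrm{Var}(S_{\mathrm{glob}})$ by definition of $p_0$. Together these give
\[
\mathrm{Var}(S_{\mathrm{sub}})\le\mathrm{Var}(S_{\mathrm{uni}})+(\lambda-p_0)\,\mathrm{Var}(S_{\mathrm{glob}}).
\]

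The second step bounds $\mathrm{Var}(S_{\mathrm{uni}})$ by the global variance, which accounts for the change of aggregation rule and for regime drift. Applying the law of total variance with respect to $\mathcal G$ gives
\[
\mathrm{Var}(S_{\mathrm{uni}})=\mathbb E[\mathrm{Var}(S_{\mathrm{uni}}\mid\mathcal G)]+\mathrm{Var}(\mathbb E[S_{\mathrm{uni}}\mid\mathcal G]).
\]
By the definition of $\gamma$, this equals $(1+\gamma)\,\mathbb E[\mathrm{Var}(S_{\mathrm{uni}}\mid\mathcal G)]$. By the definition of $g_0$, it in turn equals $\frac{1+\gamma}{g_0}\,\mathbb E[\mathrm{Var}(S_{\mathrm{glob}}\mid\mathcal G)]$. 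Both are legitimate because the denominators are assumed strictly positive and finite. Finally, the law of total variance applied to $S_{\mathrm{glob}}$ gives
\[
\mathbb E[\mathrm{Var}(S_{\mathrm{glob}}\mid\mathcal G)]\le\mathrm{Var}(S_{\mathrm{glob}}),
\]
since the between-regime term is nonnegative. Because $(1+\gamma)/g_0>0$, we conclude
\[
\mathrm{Var}(S_{\mathrm{uni}})\le\frac{1+\gamma}{g_0}\,\mathrm{Var}(S_{\mathrm{glob}}).
\]

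Combining the two steps yields $\mathrm{Var}(S_{\mathrm{sub}})\le C_{\mathrm{var}}\,\mathrm{Var}(S_{\mathrm{glob}})$, which is \eqref{eq:var_ratio_Cvar}. The assumption $C_{\mathrm{var}}\ge0$ is not used in the inequality itself; it only ensures the bound is a meaningful nonnegative factor, as it must be since the left-hand side is nonnegative.

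I expect no genuine analytic obstacle. The only delicate point is bookkeeping in the first step. Expanding naively with $\mathrm{Cov}(S_{\mathrm{uni}},P)$ gives $+\mathrm{Var}(P)$ instead of $-\mathrm{Var}(P)$, so the covariance must be re-expressed through $S_{\mathrm{sub}}$ to match the stated constant. I would also check that all conditional variances are well defined: $\mathcal G$ is a sub-sigma-field, and the scores are bounded in $[0,1]$, so their moments are finite.
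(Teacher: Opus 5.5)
Your proof is correct and follows the paper's argument essentially step for step. It uses the identity $\mathrm{Var}(S_{\mathrm{sub}})=\mathrm{Var}(S_{\mathrm{uni}})-\mathrm{Var}(P)-2\,\mathrm{Cov}(S_{\mathrm{sub}},P)$ obtained from $S_{\mathrm{uni}}=S_{\mathrm{sub}}+P$, the positive-part bound via $\lambda$, and the two law-of-total-variance steps that give $\mathrm{Var}(S_{\mathrm{uni}})\le\frac{1+\gamma}{g_0}\mathrm{Var}(S_{\mathrm{glob}})$. Your remark about $C_{\mathrm{var}}\ge 0$ is also accurate: the inequality does not need this hypothesis, and in fact implies it, since the left side is nonnegative and $\mathrm{Var}(S_{\mathrm{glob}}\mid\mathcal N)>0$.
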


\begin{proof}
Under $\mathcal N$, we have the pointwise decomposition $S_{\mathrm{uni}}=S_{\mathrm{sub}}+P$, hence
\[
\begin{aligned}
\mathrm{Var}(S_{\mathrm{uni}}\mid\mathcal N)=\mathrm{Var}(S_{\mathrm{sub}}\mid\mathcal N)+\mathrm{Var}(P\mid\mathcal N)
\\+2\,\mathrm{Cov}(S_{\mathrm{sub}},P\mid\mathcal N),
\end{aligned}
\]
and therefore
\[
\begin{aligned}
\mathrm{Var}(S_{\mathrm{sub}}\mid\mathcal N)
=
\mathrm{Var}(S_{\mathrm{uni}}\mid\mathcal N)-\mathrm{Var}(P\mid\mathcal N)
\\-2\,\mathrm{Cov}(S_{\mathrm{sub}},P\mid\mathcal N).
\end{aligned}
\]

By the definition of $\lambda$,
\[
-2\,\mathrm{Cov}(S_{\mathrm{sub}},P\mid\mathcal N)\ \le\ \lambda\,\mathrm{Var}(S_{\mathrm{glob}}\mid\mathcal N),
\]
so
\begin{equation}
\begin{aligned}
\label{eq:var_sub_le_uni_withP}
\mathrm{Var}(S_{\mathrm{sub}}\mid\mathcal N)
\le
\mathrm{Var}(S_{\mathrm{uni}}\mid\mathcal N)-\mathrm{Var}(P\mid\mathcal N)
\\+\lambda\,\mathrm{Var}(S_{\mathrm{glob}}\mid\mathcal N).
\end{aligned}
\end{equation}

Next, by the law of total variance under $\mathcal N$,
\begin{align}
\label{eq:tv_var_glob}
&\mathrm{Var}(S_{\mathrm{glob}}\mid\mathcal N)\\
&=
\mathbb E\!\left[\mathrm{Var}(S_{\mathrm{glob}}\mid\mathcal G,\mathcal N)\mid\mathcal N\right]
+\mathrm{Var}\!\left(\mathbb E[S_{\mathrm{glob}}\mid\mathcal G,\mathcal N]\mid\mathcal N\right) \notag\\
&\ge
\mathbb E\!\left[\mathrm{Var}(S_{\mathrm{glob}}\mid\mathcal G,\mathcal N)\mid\mathcal N\right]
=
g_0\,\mathbb E\!\left[\mathrm{Var}(S_{\mathrm{uni}}\mid\mathcal G,\mathcal N)\mid\mathcal N\right],
\end{align}
where the last equality is the definition of $g_0$.
Similarly,
\begin{align}
\label{eq:tv_var_uni}
&\mathrm{Var}(S_{\mathrm{uni}}\mid\mathcal N)\\
&=
\mathbb E\!\left[\mathrm{Var}(S_{\mathrm{uni}}\mid\mathcal G,\mathcal N)\mid\mathcal N\right]
+\mathrm{Var}\!\left(\mathbb E[S_{\mathrm{uni}}\mid\mathcal G,\mathcal N]\mid\mathcal N\right) \notag\\
&=
(1+\gamma)\,\mathbb E\!\left[\mathrm{Var}(S_{\mathrm{uni}}\mid\mathcal G,\mathcal N)\mid\mathcal N\right],
\end{align}
where the last equality is the definition of $\gamma$.

Combining \eqref{eq:tv_var_glob} and \eqref{eq:tv_var_uni} yields
\begin{equation}
\label{eq:var_uni_le_glob}
\mathrm{Var}(S_{\mathrm{uni}}\mid\mathcal N)
\le
\frac{1+\gamma}{g_0}\,\mathrm{Var}(S_{\mathrm{glob}}\mid\mathcal N).
\end{equation}

Plugging \eqref{eq:var_uni_le_glob} into \eqref{eq:var_sub_le_uni_withP} gives
\[
\mathrm{Var}(S_{\mathrm{sub}}\mid\mathcal N)
\le
\Bigl(\frac{1+\gamma}{g_0}+\lambda\Bigr)\mathrm{Var}(S_{\mathrm{glob}}\mid\mathcal N)
-\mathrm{Var}(P\mid\mathcal N).
\]

Since $\mathrm{Var}(P\mid\mathcal N)=p_0\,\mathrm{Var}(S_{\mathrm{glob}}\mid\mathcal N)$, we obtain
\[
\mathrm{Var}(S_{\mathrm{sub}}\mid\mathcal N)
\le
\Bigl(\frac{1+\gamma}{g_0}+\lambda-p_0\Bigr)\mathrm{Var}(S_{\mathrm{glob}}\mid\mathcal N).
\]

Since \(C_{\mathrm{var}}\ge 0\) by assumption, this proves \eqref{eq:var_ratio_Cvar} with \(C_{\mathrm{var}}\) defined in \eqref{eq:Cvar_def}.

\end{proof}

\begin{theorem}[Sufficient condition for $d'$ improvement]
\label{thm:app_mechanism_dprime}
Assume $\Delta(S_{\mathrm{glob}})>0$ and $\mathrm{Var}(S_{\mathrm{glob}}\mid\mathcal N)>0$.
Assume also $\mathrm{Var}(S_{\mathrm{sub}}\mid\mathcal N)>0$.
Assume the normal-variance bound from Proposition~\ref{prop:var_ratio_from_mechanisms}:
\[
\mathrm{Var}(S_{\mathrm{sub}}\mid\mathcal N)\ \le\ C_{\mathrm{var}}\;\mathrm{Var}(S_{\mathrm{glob}}\mid\mathcal N),
\]
where $C_{\mathrm{var}\!}\ge 0$ is defined in \eqref{eq:Cvar_def}.
Let $\eta\triangleq \sqrt{C_{\mathrm{var}}}$ (thus $\eta>0$ under the assumptions above).
If the mean-gap compatibility condition
\begin{equation}
\label{eq:mean_compat_condition}
\Delta(S_{\mathrm{glob}})+\tfrac12\,\Delta B-\Delta P\ \ge\ \eta\,\Delta(S_{\mathrm{glob}})
\end{equation}
holds (equivalently, $\Delta P-\tfrac12\Delta B\le (1-\eta)\Delta(S_{\mathrm{glob}})$),
then
\[
d'(S_{\mathrm{sub}})\ \ge\ d'(S_{\mathrm{glob}}).
\]
\end{theorem}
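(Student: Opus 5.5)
The plan is to prove the theorem by comparing numerators and denominators of $d'$ separately. The key tool is an exact mean-gap identity that follows from the pointwise decompositions $S_{\mathrm{sub}}=S_{\mathrm{uni}}-P$ and $S_{\mathrm{glob}}=S_{\mathrm{uni}}-\tfrac12 B$ established in the setup. Eliminating $S_{\mathrm{uni}}$ gives the pointwise relation $S_{\mathrm{sub}}=S_{\mathrm{glob}}+\tfrac12 B-P$. Taking class-conditional expectations under $\mathcal A$ and $\mathcal N$ and subtracting then yields, by linearity (all first moments finite by assumption),
\[
\Delta(S_{\mathrm{sub}})=\Delta(S_{\mathrm{glob}})+\tfrac12\,\Delta B-\Delta P .
\]
With this identity, the compatibility condition \eqref{eq:mean_compat_condition} is exactly $\Delta(S_{\mathrm{sub}})\ge \eta\,\Delta(S_{\mathrm{glob}})$. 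This also matches the form stated in Theorem~1 of the main text.

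Next I handle the denominator. The variance bound of Proposition~\ref{prop:var_ratio_from_mechanisms} gives $\mathrm{Var}(S_{\mathrm{sub}}\mid\mathcal N)\le \eta^2\,\mathrm{Var}(S_{\mathrm{glob}}\mid\mathcal N)$. Since $\mathrm{Var}(S_{\mathrm{sub}}\mid\mathcal N)>0$ by assumption, this forces $C_{\mathrm{var}}>0$, so $\eta>0$. Taking square roots, using monotonicity on nonnegative reals, gives $0<\sqrt{\mathrm{Var}(S_{\mathrm{sub}}\mid\mathcal N)}\le \eta\sqrt{\mathrm{Var}(S_{\mathrm{glob}}\mid\mathcal N)}$.

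Finally I combine the two bounds. Because $\Delta(S_{\mathrm{glob}})>0$ and $\eta>0$, the numerator satisfies $\Delta(S_{\mathrm{sub}})\ge\eta\Delta(S_{\mathrm{glob}})>0$. Dividing a positive lower bound on the numerator by a positive upper bound on the denominator gives
\[
d'(S_{\mathrm{sub}})=\frac{\Delta(S_{\mathrm{sub}})}{\sqrt{\mathrm{Var}(S_{\mathrm{sub}}\mid\mathcal N)}}\ \ge\ \frac{\eta\,\Delta(S_{\mathrm{glob}})}{\eta\sqrt{\mathrm{Var}(S_{\mathrm{glob}}\mid\mathcal N)}}=d'(S_{\mathrm{glob}}).
\]
The $\eta$ factors cancel, completing the argument.

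The only delicate point is sign bookkeeping. The ratio inequality needs the numerator lower bound to be positive, and it needs $\eta\neq 0$ so that the cancellation is legitimate. Both are secured by the hypotheses $\Delta(S_{\mathrm{glob}})>0$ and $\mathrm{Var}(S_{\mathrm{sub}}\mid\mathcal N)>0$. The exact mean identity is routine once the two pointwise decompositions are in hand.
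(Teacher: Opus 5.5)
Your proof is correct and follows the paper's argument: the mean-gap identity $\Delta(S_{\mathrm{sub}})=\Delta(S_{\mathrm{glob}})+\tfrac12\Delta B-\Delta P$, obtained from the pointwise decompositions, turns the compatibility condition into $\Delta(S_{\mathrm{sub}})\ge\eta\,\Delta(S_{\mathrm{glob}})$, and combining this with $\sqrt{\mathrm{Var}(S_{\mathrm{sub}}\mid\mathcal N)}\le\eta\sqrt{\mathrm{Var}(S_{\mathrm{glob}}\mid\mathcal N)}$ gives the claim; the paper writes this as a product of two ratio bounds, $\eta\cdot 1/\eta=1$, rather than bounding numerator and denominator separately. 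Your explicit derivation that $C_{\mathrm{var}}>0$ (from $0<\mathrm{Var}(S_{\mathrm{sub}}\mid\mathcal N)\le C_{\mathrm{var}}\mathrm{Var}(S_{\mathrm{glob}}\mid\mathcal N)$) usefully fills in the paper's brief remark that $\eta>0$.
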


\begin{proof}
Using $d'(S)=\Delta(S)/\sqrt{\mathrm{Var}(S\mid\mathcal N)}$,
\[
\frac{d'(S_{\mathrm{sub}})}{d'(S_{\mathrm{glob}})}
=
\frac{\Delta(S_{\mathrm{sub}})}{\Delta(S_{\mathrm{glob}})}
\cdot
\sqrt{\frac{\mathrm{Var}(S_{\mathrm{glob}}\mid\mathcal N)}{\mathrm{Var}(S_{\mathrm{sub}}\mid\mathcal N)}}.
\]

To show this ratio is at least $1$, it suffices to find $\eta>0$ such that
\[
\frac{\Delta(S_{\mathrm{sub}})}{\Delta(S_{\mathrm{glob}})}\ \ge\ \eta
\quad\text{and}\quad
\sqrt{\frac{\mathrm{Var}(S_{\mathrm{glob}}\mid\mathcal N)}{\mathrm{Var}(S_{\mathrm{sub}}\mid\mathcal N)}}\ \ge\ \frac{1}{\eta},
\]
because then multiplying the two bounds gives
\[
\frac{d'(S_{\mathrm{sub}})}{d'(S_{\mathrm{glob}})}\ \ge\ \eta\cdot \frac{1}{\eta}\ =\ 1.
\]

The second inequality above is equivalent to
$\mathrm{Var}(S_{\mathrm{sub}}\mid\mathcal N)\le \eta^2\,\mathrm{Var}(S_{\mathrm{glob}}\mid\mathcal N)$,
which holds by the assumed bound with $\eta=\sqrt{C_{\mathrm{var}}}$.

For the first inequality, use the pointwise identities
$S_{\mathrm{sub}}=S_{\mathrm{uni}}-P$ and $S_{\mathrm{glob}}=S_{\mathrm{uni}}-\tfrac12 B$ to obtain
\[
S_{\mathrm{sub}}-S_{\mathrm{glob}}=\tfrac12 B-P.
\]

Taking class-conditional expectations and subtracting gives the exact mean-gap identity
\[
\Delta(S_{\mathrm{sub}})-\Delta(S_{\mathrm{glob}})
=
\tfrac12\,\Delta B-\Delta P,
\]\[\text{equivalently,}\qquad
\Delta(S_{\mathrm{sub}})
=
\Delta(S_{\mathrm{glob}})+\tfrac12\,\Delta B-\Delta P.
\]

Thus the compatibility condition \eqref{eq:mean_compat_condition} is exactly
$\Delta(S_{\mathrm{sub}})\ge \eta\,\Delta(S_{\mathrm{glob}})$, which is the first inequality needed.
Combining the two inequalities yields $d'(S_{\mathrm{sub}})\ge d'(S_{\mathrm{glob}})$.
\end{proof}

\noindent\textbf{Interpretation.}
Intuitively, sub-band matching can improve $d'$ when it reduces normal-class variability more than it erodes the normal--anomalous mean separation. Proposition~\ref{prop:var_ratio_from_mechanisms} formalizes the first part by bounding $\mathrm{Var}(S_{\mathrm{sub}}\mid\mathcal N)\le C_{\mathrm{var}}\mathrm{Var}(S_{\mathrm{glob}}\mid\mathcal N)$ with $C_{\mathrm{var}}=((1+\gamma)/g_0+\lambda-p_0)$. The factor $(1+\gamma)/g_0$ quantifies the change in normal variability when replacing energy-coupled global cosine aggregation ($S_{\mathrm{glob}}$) by uniform band aggregation ($S_{\mathrm{uni}}$) across operating conditions summarized by $\mathcal G$, while $(\lambda-p_0)$ captures the effect of per-band template selection relative to a tied global reference through the mismatch penalty $P$ and its covariance term. Theorem~\ref{thm:app_mechanism_dprime} gives a sufficient condition for $d'$ improvement provided the mean gap is not reduced beyond the variance gain, namely $\Delta(S_{\mathrm{sub}})\ge \eta\Delta(S_{\mathrm{glob}})$ with $\eta=\sqrt{C_{\mathrm{var}}}$. Using $\Delta(S_{\mathrm{sub}})=\Delta(S_{\mathrm{glob}})+\tfrac12\Delta B-\Delta P$, this condition becomes $\Delta(S_{\mathrm{glob}})+\tfrac12\Delta B-\Delta P\ge \eta\Delta(S_{\mathrm{glob}})$, where $\Delta P$ is the cost of enforcing a single global reference and $\Delta B$ is the mean shift from removing energy-weighted coupling in the global cosine. LDN rescales each raw band-wise distance by a local scale statistic around the selected memory item (e.g., a $k$-NN radius), which can reduce sensitivity to band- or condition-dependent multiplicative distance scaling under $\mathcal N$ and make band scores more comparable before uniform averaging; the same derivation then carries through Theorem~\ref{thm:app_mechanism_dprime} after replacing $D_{j,i}(y)$ by its LDN-normalized counterpart, with LDN entering only through the resulting variance and mean-gap terms.

\begin{table*}[!t]
\centering
\caption{Variance-ratio measures, mean-gap terms, and sensitivity index ($d'$) across band sizes for band-preserved features; all methods are compared without LDN.}

\label{tab:compare_d_results}
\resizebox{0.9\textwidth}{!}{%
\setlength{\tabcolsep}{10pt}
\begin{tabular}{m{1.8cm} c cccccccc}
\toprule
Dataset & Band size & $\mathrm{Var(Sub)}/\mathrm{Var(Uni)}$ &
$\mathrm{Var(Uni)}/\mathrm{Var(Glob)}$ &
$\mathrm{Var(Sub)}/\mathrm{Var(Glob)}$ &
$\Delta\mu_{\mathrm{Glob}}$ &
$\Delta\mu_{\mathrm{Sub}}$ &
$d'_{\mathrm{Glob}}$ &
$d'_{\mathrm{Sub}}$ \\
\midrule

\multicolumn{9}{l}{\textbf{BEATs iter3}} \\
\midrule

\multirow{5}{*}{\begin{sideways}DCASE2020\end{sideways}} &
768 & 0.485 & 1.023 & 0.496 & 0.009 & 0.008 & 2.023 & 2.398 \\
& 1536 & 0.609 & 1.011 & 0.616 & 0.009 & 0.008 & 2.023 & 2.314 \\
& 2304 & 0.686 & 1.002 & 0.686 & 0.009 & 0.009 & 2.023 & 2.258 \\
& 3072 & 0.766 & 1.006 & 0.770 & 0.009 & 0.009 & 2.023 & 2.183 \\
& 6144 (global) & 1.000 & 1.000 & 1.000 & 0.009 & 0.009 & 2.023 & 2.023 \\

\midrule

\multirow{5}{*}{\begin{sideways}DCASE2023\end{sideways}} &
768 & 0.538 & 1.024 & 0.549 & 0.007 & 0.006 & 0.751 & 0.851 \\
& 1536 & 0.683 & 0.998 & 0.681 & 0.007 & 0.006 & 0.751 & 0.817 \\
& 2304 & 0.762 & 0.976 & 0.741 & 0.007 & 0.006 & 0.751 & 0.849 \\
& 3072 & 0.822 & 0.995 & 0.817 & 0.007 & 0.006 & 0.751 & 0.793 \\
& 6144 (global) & 1.000 & 1.000 & 1.000 & 0.007 & 0.007 & 0.751 & 0.751 \\

\midrule

\multirow{5}{*}{\begin{sideways}DCASE2024\end{sideways}} &
768 & 0.595 & 1.024 & 0.612 & 0.002 & 0.002 & 0.450 & 0.517 \\
& 1536 & 0.720 & 1.007 & 0.727 & 0.002 & 0.002 & 0.450 & 0.506 \\
& 2304 & 0.802 & 1.032 & 0.827 & 0.002 & 0.002 & 0.450 & 0.445 \\
& 3072 & 0.832 & 0.997 & 0.830 & 0.002 & 0.002 & 0.450 & 0.483 \\
& 6144 (global) & 1.000 & 1.000 & 1.000 & 0.002 & 0.002 & 0.450 & 0.450 \\

\midrule

\midrule
\multicolumn{9}{l}{\textbf{Log-Mel}} \\
\midrule

\multirow{5}{*}{\begin{sideways}DCASE2020\end{sideways}} &
12 & 0.088 & 1.008 & 0.085 & $3.107\times10^{-4}$ & $1.134\times10^{-4}$ & 1.211 & 1.199 \\
& 25 & 0.218 & 1.052 & 0.219 & $3.107\times10^{-4}$ & $1.798\times10^{-4}$ & 1.211 & 1.281 \\
& 38 & 0.343 & 1.104 & 0.368 & $3.107\times10^{-4}$ & $2.186\times10^{-4}$ & 1.211 & 1.298 \\
& 51 & 0.441 & 1.199 & 0.520 & $3.107\times10^{-4}$ & $2.417\times10^{-4}$ & 1.211 & 1.241 \\
& 128 (global) & 1.000 & 1.000 & 1.000 & $3.107\times10^{-4}$ & $3.107\times10^{-4}$ & 1.211 & 1.211 \\

\midrule

\multirow{5}{*}{\begin{sideways}DCASE2023\end{sideways}} &
12 & 0.166 & 0.632 & 0.091 & $7.071\times10^{-5}$ & $4.646\times10^{-5}$ & 0.116 & 0.144 \\
& 25 & 0.357 & 0.748 & 0.245 & $7.071\times10^{-5}$ & $7.563\times10^{-5}$ & 0.116 & 0.150 \\
& 38 & 0.494 & 0.884 & 0.416 & $7.071\times10^{-5}$ & $9.585\times10^{-5}$ & 0.116 & 0.141 \\
& 51 & 0.561 & 0.954 & 0.540 & $7.071\times10^{-5}$ & $8.344\times10^{-5}$ & 0.116 & 0.134 \\
& 128 (global) & 1.000 & 1.000 & 1.000 & $7.071\times10^{-5}$ & $7.071\times10^{-5}$ & 0.116 & 0.116 \\

\midrule

\multirow{5}{*}{\begin{sideways}DCASE2024\end{sideways}} &
12 & 0.124 & 0.723 & 0.089 & $1.177\times10^{-4}$ & $2.766\times10^{-5}$ & 0.321 & 0.434 \\
& 25 & 0.352 & 0.836 & 0.294 & $1.177\times10^{-4}$ & $4.935\times10^{-5}$ & 0.321 & 0.454 \\
& 38 & 0.465 & 0.900 & 0.423 & $1.177\times10^{-4}$ & $7.240\times10^{-5}$ & 0.321 & 0.426 \\
& 51 & 0.594 & 0.974 & 0.588 & $1.177\times10^{-4}$ & $7.582\times10^{-5}$ & 0.321 & 0.366 \\
& 128 (global) & 1.000 & 1.000 & 1.000 & $1.177\times10^{-4}$ & $1.177\times10^{-4}$ & 0.321 & 0.321 \\

\midrule

\midrule
\multicolumn{9}{l}{\textbf{LPC Spectrum}} \\
\midrule

\multirow{5}{*}{\begin{sideways}DCASE2020\end{sideways}} &
800 & 0.076 & 1.959 & 0.223 & 0.005 & 0.002 & 1.949 & 2.421 \\
& 1600 & 0.190 & 2.420 & 0.455 & 0.005 & 0.004 & 1.949 & 2.231 \\
& 2400 & 0.284 & 3.133 & 0.875 & 0.005 & 0.006 & 1.949 & 2.775 \\
& 3200 & 0.361 & 2.716 & 0.861 & 0.005 & 0.006 & 1.949 & 3.021 \\
& 8000 (global) & 1.000 & 1.000 & 1.000 & 0.005 & 0.005 & 1.949 & 1.949 \\

\midrule

\multirow{5}{*}{\begin{sideways}DCASE2023\end{sideways}} &
800 & 0.124 & 1.655 & 0.245 & $7.735\times10^{-4}$ & $6.643\times10^{-4}$ & 0.336 & 0.534 \\
& 1600 & 0.287 & 1.839 & 0.595 & $7.735\times10^{-4}$ & 0.001 & 0.336 & 0.602 \\
& 2400 & 0.434 & 2.511 & 1.242 & $7.735\times10^{-4}$ & 0.002 & 0.336 & 0.478 \\
& 3200 & 0.503 & 2.063 & 1.101 & $7.735\times10^{-4}$ & 0.002 & 0.336 & 0.428 \\
& 8000 (global) & 1.000 & 1.000 & 1.000 & $7.735\times10^{-4}$ & $7.735\times10^{-4}$ & 0.336 & 0.336 \\

\midrule

\multirow{5}{*}{\begin{sideways}DCASE2024\end{sideways}} &
800 & 0.059 & 4.184 & 0.152 & $2.600\times10^{-4}$ & $4.072\times10^{-5}$ & 0.177 & 0.229 \\
& 1600 & 0.169 & 4.200 & 0.496 & $2.600\times10^{-4}$ & $1.240\times10^{-4}$ & 0.177 & 0.246 \\
& 2400 & 0.262 & 3.571 & 0.888 & $2.600\times10^{-4}$ & $2.118\times10^{-4}$ & 0.177 & 0.251 \\
& 3200 & 0.395 & 2.687 & 1.061 & $2.600\times10^{-4}$ & $2.405\times10^{-4}$ & 0.177 & 0.197 \\
& 8000 (global) & 1.000 & 1.000 & 1.000 & $2.600\times10^{-4}$ & $2.600\times10^{-4}$ & 0.177 & 0.177 \\

\bottomrule
\end{tabular}
}
\end{table*}

\subsection{Additional results on sensitivity analysis}
\label{app:quan_d}
Table~\ref{tab:compare_d_results} reveals distinct $d'$ trends among band-preserved features, contrasting BEATs with handcrafted representations (Log-Mel and LPC spectrum). BEATs consistently exhibits peak $d'$ at the smallest band size, corresponding to temporally pooled patch-level embeddings, and gradually regresses toward the global baseline (i.e., the largest band size) as band size increases. This behavior is consistent with the invariance properties induced by large-scale pre-training, which reduce sensitivity to nuisance factors such as noise and operating-condition shifts even at very fine partitions. In contrast, Log-Mel and LPC generally achieve peak $d'$ at small-to-medium band sizes, indicating a different balance between variance suppression and mean-gap behavior.

These differences can be explained by the interaction between tied-reference mismatch, energy-weighted aggregation, and overall variance suppression as band size varies. For BEATs, sub-band matching primarily reduces variance by alleviating tied-reference mismatch, reflected in $\mathrm{Var}(S_{\mathrm{sub}}\mid\mathcal N)/\mathrm{Var}(S_{\mathrm{uni}}\mid\mathcal N)<1$, while the anomaly mean gap remains close to the global baseline, so improvements in $d'$ are variance-driven. Suppression of energy-dominant aggregation, captured by $\mathrm{Var}(S_{\mathrm{uni}}\mid\mathcal N)/\mathrm{Var}(S_{\mathrm{glob}}\mid\mathcal N)$, is most pronounced on the domain-generalization (DG) datasets for both BEATs and Log-Mel, although BEATs remains relatively stable across band sizes. Log-Mel benefits from both reduced mismatch and mitigation of energy-dominant aggregation, but at very fine partitions the overall variance term $\mathrm{Var}(S_{\mathrm{sub}}\mid\mathcal N)/\mathrm{Var}(S_{\mathrm{glob}}\mid\mathcal N)$ can be over-suppressed and the mean gap collapses, leading to non-monotonic $d'$ behavior. In contrast, LPC is dominated by the tied-reference mechanism: aggregation effects are weak or even adverse, yet $d'$ still increases because sub-band matching suppresses mismatch-driven normal-score variance and simultaneously increases the anomaly mean gap, as envelope distortions caused by faults are more localized and structurally consistent within the same sub-bands than the diffuse, band-inconsistent fluctuations observed under normal operation.

Overall, these observations are consistent with our Theorem~1, which predict that sensitivity improvements arise when reductions in normal-score variance are not offset by a corresponding collapse in the anomaly mean gap. Sub-band matching systematically promotes this regime, explaining its consistent gains in $d'$ across feature types, despite their differing spectral and statistical properties.